%2multibyte Version: 5.50.0.2953 CodePage: 1252
%\usepackage{natbib}
%\newcommand{\argmin}{\mathrm{argmin}}
%[section]
%\geometry{left=1in,right=1in,top=1.2in,bottom=1.2in}
% \externaldocument{LLSS-15-Nov-2016-SL-supplement}
%\input{tcilatex}
%\newenvironment{proof}[1][Proof]{\noindent\textbf{#1.} }{\ \rule{0.5em}{0.5em}}

\documentclass[11pt]{article}
%%%%%%%%%%%%%%%%%%%%%%%%%%%%%%%%%%%%%%%%%%%%%%%%%%%%%%%%%%%%%%%%%%%%%%%%%%%%%%%%%%%%%%%%%%%%%%%%%%%%%%%%%%%%%%%%t%%%%%%%%%%%%%%%%%%%%%%%%%%%%%%%%%%%%%%%%%%%%%%%%%%%%%%%%%%%%%%%%%%%%%%%%%%%%%%%%%%%%%%%%%%%%%%%%%%%%%%%%%%%%%%%%%%%%%%%%%%%%%%%%%%%%%%%%%%
\usepackage[authoryear]{natbib}
\usepackage{amssymb}
\usepackage{amsfonts}
\usepackage{amsmath}
\usepackage{dsfont}
\usepackage{soul}
\usepackage[nohead]{geometry}
\usepackage{graphicx}
\usepackage{amsthm}
\usepackage{color}
\usepackage{comment}
\usepackage{setspace}
\usepackage{framed}
\usepackage{enumitem}
\usepackage{todonotes}
\usepackage{tikz}
\usepackage{booktabs}
\usepackage{hyperref}
\usepackage{rotating}
\usepackage{subfigure}
\usepackage[flushleft]{threeparttable}
\usepackage{multirow}
\newcommand{\cref}[2][1]{{\textup{(\hyperref[#2]{\ref*{#2}$_{#1}$})}}}
\usepackage{xr}
\usepackage{bm}
\allowdisplaybreaks

\setcounter{MaxMatrixCols}7
%TCIDATA{OutputFilter=LAT EX.DLL}
%TCIDATA{Version=5.50.0.2953}
%TCIDATA{Codepage=1252}
%TCIDATA{<META NAME="SaveForMode" CONT ENT="1">}
%TCIDATA{BibliographyScheme=Manual}
%TCIDATA{LastRevised=Thursday, February 23, 2017 13:53:11}
%TCIDATA{<META NAME="GraphicsSave" CONT ENT="32">}

\newcommand{\diag}{\mathrm{diag}}

\newcommand{\var}{\mathrm{var}}
\newcommand{\beq}{\begin{eqnarray*}}
	\newcommand{\eeq}{\end{eqnarray*}}
\newtheorem{thm}{Theorem}[section]

\newtheorem{lem}{Lemma}[section]

\newtheorem{assum}{Assumption}[section]
\newtheorem{pro}{Proposition}[section]
\numberwithin{equation}{section}
\theoremstyle{definition}

\newtheorem{remark}{Remark}[section]
\makeatletter
\def\@biblabel#1{\hspace*{-\labelsep}}
\makeatother
\geometry{left=1.2in,right=1.2in,top=1.2in,bottom=1.2in}

\begin{document}
	
	\title{Feasible Generalized Least Squares for Panel Data with Cross-sectional and Serial Correlations}
	\date{\today }

	\author{
		Jushan Bai\thanks{%
			Address: 420 West 118th St. MC 3308, New York, NY 10027, USA. E-mail: \texttt{ jb3064@columbia.edu}.} \\ \footnotesize Columbia University 
		\and Sung Hoon Choi\thanks{%
			Address: 75 Hamilton St., New Brunswick, NJ 08901, USA. E-mail: 
			\texttt{sc1711@economics.rutgers.edu}.} \\  \footnotesize  Rutgers University  \and
		Yuan Liao\thanks{Address: 75 Hamilton St., New Brunswick, NJ 08901, USA. Email:
			\texttt{yuan.liao@rutgers.edu}.}\\  \footnotesize   Rutgers University 
	}

	\maketitle
	
	\begin{abstract}
	
		This paper considers generalized least squares (GLS) estimation for linear panel data models. By estimating the large error covariance matrix consistently, the proposed feasible GLS (FGLS) estimator is more efficient than the ordinary least squares (OLS) in the presence of heteroskedasticity, serial and cross-sectional correlations. To take into account the serial correlations, we employ the banding method. To take into account the cross-sectional correlations,  we suggest to use the thresholding method.	We establish the limiting distribution of the proposed estimator. A Monte Carlo study is considered. The proposed method is applied to an empirical application.
		
		Keywords: Panel data, efficiency, thresholding, banding, cross-sectional correlation, serial correlation, heteroskedasticity 
	\end{abstract}
	
	\thispagestyle{empty}
	
	%\newpage
	%\setcounter{page}{1}
	%\pagenumbering{roman}
	
	%\tableofcontents
	%\newpage

	\onehalfspacing

	\newpage
	\setcounter{page}{1}
	\pagenumbering{arabic}
	
	\section{Introduction}
	Heteroskedasticity, cross-sectional and serial correlations are important problems in the error terms of panel regression models. There are two approaches to deal with these problems. The first approach is to use the ordinary least squares (OLS) estimator but with a robust standard error that is robust to heteroskedasticity and correlations, for example, \cite{white1980heteroskedasticity}; \cite{newey1986simple}; \cite{liang1986}; \cite{arellano1987}; \cite{driscoll1998}; \cite{hansen2007}; \cite{vogelsang2012}, among others. A widely used class of robust standard errors are clustered standard errors, for example, \cite{petersen2009}, \cite{wooldridge2010econometric} and \cite{cameron2015}.  \cite{bai2019olsse} proposed a robust standard error with unknown clusters. In an interesting paper by  \cite{abadie2017should}, they argued for caution in the application of clustered standard errors since they may give rise to conservative confidence intervals. The second approach is to use the generalized least squares estimator (GLS) that directly takes into account heteroskedasticity, and  cross-sectional and serial correlations in the estimation. It is well known that GLS is more efficient than OLS.
	
	This paper focuses on the second approach.  For panel models, the underlying covariance matrix involves a large number of parameters. It is important to make GLS operational.  We thus consider feasible generalized least squares (FGLS). \cite{hansen2007fgls} studied FGLS estimation that takes into account serial correlation and clustering problems in fixed effects panel and multilevel models. His approach  requires the cluster structure to be known.  This gives motivation to our paper. We assume the unknown cluster structure, and control heteroskedasticity, both serial and cross-sectional correlations by estimating the large error covariance matrix consistently. In cross-sectional setting, \cite{romano2017resurrecting} obtained asymptotically valid inference of the FGLS estimator, combined with heteroskedasticity-consistent standard errors without knowledge of the conditional heteroskedasticity functional form. Moreover, \cite{miller2018feasible} adapted machine learning methods (i.e., support vector regression) to take into account the misspecified form of heteroskedasticity.

	In this paper, we consider (i) balanced panel data, (ii) the case of  large-$N$ large-$T$, and  (iii) both serial and cross-sectional correlations, but unknown stucture of clusters.
	We introduce a modified FGLS estimator that eliminates the cross-sectional and serial correlation bias by proposing a high-dimensional  error covariance matrix estimator. In addition, our proposed method is applicable when the knowledge of clusters is not available.  Following an idea suggested in \cite{bailaio2017}, in this paper, the FGLS involves estimating an $NT\times NT$ dimensional  inverse covariance matrix $\Omega^{-1}$, where 
	$$
	\Omega=(Eu_{t}u_{s}')
	$$
	where each block $Eu_{t}u_{s}'$ is an $N\times N$ autocovariance matrix. Here parametric structures on the serial or cross-sectional correlations are not imposed. By assuming weak dependences, we apply nonparametric methods to estimate the covariance matrix. To control the autocorrelation in time series, we employ the idea of Newey-West truncation. This method, in the FGLS setting, is 
	equivalent to  ``banding", previously proposed by  
	   \cite{bickel2008b} for estimating large covariance matrices. We apply it to banding out   off-diagonal $N\times N$ blocks that are far from the diagonal block.  
	    In addition, to control for the cross-sectional correlation, we assume that  each of the $N\times N$ block matrices are sparse, potentially resulting from the presence of cross-sectional correlations within clusters.  We then estimate them by applying the thresholding approach of \cite{bickel2008a}. We apply thresholding separately to the $N\times N$ blocks, which are formed by  time lags
	    $Eu_{t}u_{t-h}': h=0,1,2,...$. 
This allows the cluster-membership to be potentially changing over-time.
A contribution of this paper is the theoretical justification for estimating the large error covariance matrix.

	For the FGLS, it is crucial for the asymptotic analysis to prove that the effect of estimating $\Omega$ is first-order negligible. In the usual low-dimensional settings that involve estimating optimal weight matrix, such as the optimal GMM estimations, it has been well known that   consistency for the   inverse covariance matrix estimator is sufficient for the first-order asymptotic theory, e.g., \cite{hansen1982}, \cite{newey1990efficient}, \cite{newey1994large}. However, it turns out that when the covariance matrix is of high-dimensions, not even the optimal convergence rate for estimating  $  \Omega^{-1}$ is sufficient. 
	In fact, proving the first-order equivalence between the FGLS and the infeasible GLS (that uses the true $\Omega^{-1}$) is   a very challenging problem under the large $N$, large $T$ setting. We provide a new theoretical argument to achieve this goal.   
	
	The banding and thresholding methods, which we employ in this paper, are two of the useful regularization methods. In the recent machine learning literature, these methods have been extensively exploited for estimating high-dimensional parameters.  Moreover, in the econometric literature, nonparametric machine learning techniques have been verified to be powerful tools: \cite{baing2017}; \cite{chernozhukov2016, chernozhukov2017}; \cite{wager2018estimation}, etc.
	
	The rest of the paper is organized as follows. In Section \ref{sec:gls}, we describe the model and the large error covariance matrix estimator. Also we introduce the implementation of FGLS estimatior and its limiting distribution. Section \ref{simulation} presents  Monte Carlo studies evaluating the finite sample performance of the estimators. In Section \ref{application}, we apply our methods to study the US divorce rate problem. Conclusions are provided in Section \ref{conclusion}. All  proofs are given in Appendix \ref{appendix}.
	
	Throughout this paper, let $\nu_{\min}(A)$ and $\nu_{\max}(A)$ denote the minimum and maximum eigenvalues of matrix $A$ respectively. Also we use $\|A\| = \sqrt{\nu_{\max}(A'A)}$, $\|A\|_{1} = \max_{i}\sum_{j}|A_{ij}|$ and $\|A\|_{F} = \sqrt{tr(A'A)}$ as the operator norm, $\ell_1$-norm and the Frobenius norm of a matrix A, respectively. Note that if $A$ is a vector, $\|A\| =\|A\|_{F}$ is equal to the Euclidean norm.%, and $|a|$ is the Absolute-value norm of a scalar $a$.
	
	%Section 2
	\section{Feasible Generalized Least Squares} \label{sec:gls}

	We consider a linear model \footnote{For technical simplicity we focus on a simple model where there are no fixed effects. It is straightforward to allow additive fixed effects $\alpha_i+\mu_t$ by applying the de-meaning first. The theories would be slightly more sophisticated, though such extensions are straightforward.}
	\begin{equation} \label{e.2.1}
	y_{it} = x_{it}'\beta + u_{it}. 	\end{equation}
	The model (\ref{e.2.1}) can be stacked and represented in full matrix notation as
	\begin{equation} \label{e.3.1}
	Y = X\beta + U,  
	\end{equation}
	where  $Y = (y_1',\cdots, y_T')'$ is the $NT \times 1$ vector of $y_{it}$ with each $y_t$ being an $N\times 1$ vector; $X = (x_1',\cdots, x_T')'$ is the $NT \times d$ matrix of $x_{it}$ with each $x_t$ being an $N\times d$; $U = (u_1',\cdots, u_T')'$ is the $NT \times 1$ vector of $u_{it}$ with each $u_t$ being an $N\times 1$ vector. 
	
	Let $\Omega = (Eu_{t}u_{s}')$ be an $NT \times NT$ matrix, consisting of many ‘‘blocks’’ matricies. The $(t,s)$th block is  an $N \times N$ covariance matrix $Eu_tu_s'$.  We consider the following   (infeasible) GLS estimator of $\beta$:
	\begin{equation} \label{e.3.2}
	\widetilde{\beta}_{GLS}^{inf} = (X'\Omega^{-1}X)^{-1}X'\Omega^{-1}Y.
	\end{equation}
	Note that $\Omega$ is a   high-dimensional conditional covariance matrix, which is very difficult to estimate.  
	We aim  to achieve the following: (i) obtain a ``good" estimator of $\Omega^{-1}$, allowing an arbitrary form of weak dependence in $u_{it}$, and (ii) show that the effect of replacing $\Omega^{-1}$ by $\widehat{\Omega}^{-1}$ is asymptotically negligible.
	
	We start with a population approximation for $\Omega$ in order to gain the intuitions. Then, we suggest the estimator for $\Omega$ that takes into account   both correlations problem.
	
	\subsection{Population approximation}
	We start with a ``banding" approximation to control serial correlations. Recall that $\Omega = (Eu_{t}u_{s}')$, where the $(t, s)$ block is $Eu_{t}u_{s}'$. By assuming serial stationarity and strong mixing condition, $Eu_{t}u_{s}'$ depends on $(t, s)$ only through $h=t-s$. Specifically, with slight abuse of notation, we can write $\Omega_{t,s} = \Omega_{h} = Eu_{t}u_{t-h}'$.  Note for $i\neq j$, it is possible that  $Eu_{it}u_{j,t-h}\neq Eu_{i,t-h}u_{jt}$, so $\Omega_{h}$ is possibly  non-symmetric  for $h>0$. On the other hand, $\Omega$ is symmetric due to $\Omega_{s,t} = \Omega_{t,s}'$. The diagonal blocks are the same, and all equal $\Omega_{0}=Eu_{t}u_{t}'$, while magnitudes of the elements of the off-diagonal blocks $\Omega_{h}=Eu_{t}u_{t-h}'$ decay to  zero as $|h|\to\infty$ under the weak  serial dependence assumption. 
	
	In the Newey-West spirit, $\Omega$ can be approximated by $\Omega^{NW} = (\Omega_{t,s}^{NW})$, where each block can be written as $\Omega_{t,s}^{NW}= \Omega_{h}^{NW}$ for $h=t-s$. Here  $\Omega_{h}^{NW}$ is an  $N \times N$ block matrix, defined as:
	\begin{equation*} \label{e.3.3}
	\Omega_{h}^{NW} = \begin{cases}Eu_{t}u_{t-h}', & \text{if} \;\; |h| \leq L\\
	0, & \text{if} \;\; |h| > L,
	\end{cases} 
	\end{equation*}
	for some pre-determined $L \rightarrow \infty$. For instance, as suggested by \cite{newey1994}, we can set $L$ equal to $4(T/100)^{(2/9)}$. Note that $\Omega_{h}^{NW} =\Omega_{-h}^{NW'}$.
	We regard $\Omega^{NW} = (\Omega_{h}^{NW})$ as the ``population banding approximation''.
	
	Next, we focus on the $N\times N$ block matrix $\Omega_{h}=Eu_{t}u_{t-h}' $ to control cross-sectional correlations. Under the intuition that $u_{it}$ is cross-sectional weakly dependent, we assume $\Omega_{h}$ is a sparse matrix, that is, $\Omega_{h,ij}=Eu_{it}u_{j,t-h}$ is ``small" for ``many" pairs $(i,j)$. Then $\Omega_{h}$ can be approximated by a sparse matrix $\Omega_{h}^{BL} = (\Omega_{h,ij}^{BL})_{N \times N}$  (\cite{bickel2008a}), where
	\begin{equation*} \label{e.3.4}
	\Omega_{h,ij}^{BL} =  \begin{cases}Eu_{it}u_{j,t-h}, & \text{if} \;\; |Eu_{it}u_{j,t-h}| >\tau_{ij}\\~ 
	0, & \text{if} \;\; |Eu_{it}u_{j,t-h}| \leq \tau_{ij},
	\end{cases}   
	\end{equation*}
	for some pre-determined threshold $\tau_{ij} \rightarrow 0$. We regard $\Omega_{h}^{BL}$ as the ``population sparse approximation''. 
	
In summary, we approximate $\Omega$ by an $NT\times NT$ matrix 
  $(\widetilde \Omega^{NT}_{t,s})$, where  each block $\widetilde \Omega^{NT}_{t,s}$ is an $N\times N$ matrix, defined as: for $h=t-s$,
$$
\widetilde \Omega^{NT}_{t,s}:= \begin{cases}
	\Omega_h^{BL} , & \text{if} \;\; |h| \leq L  \\
	0, & \text{if} \;\; |h| > L.
	\end{cases} \quad
$$
Therefore, we use ``banding'' to control the serial correlation, and ``sparsity'' to control the cross-sectional correlation.
Note that an advantage of the method proposed in this paper is that it does not assume known cluster information (i.e., the number of clusters and the membership of clusters). Moreover, this method could also be modified to take into account the clustering information when available.

	\subsection{Implementation of Feasible GLS} \label{implemenation}
	\subsubsection{The estimator of $\Omega$ and FGLS} \label{estimators}
	Given the intuition of the  population approximation, we construct the large covariance estimator as follows. First, we denote the OLS estimator of $\beta$ by $\widehat{\beta}_{OLS}$ and the corresponding residuals by $\widehat{u}_{it} = y_{it} - x_{it}'\widehat{\beta}_{OLS}$. 
	
	Now we estimate the $N\times N$ block matrix $\Omega_{h}=Eu_{t}u_{t-h}'$. To do so, let 
	$$
	\widetilde{R}_{h,ij} =\begin{cases}
\frac{1}{T}\sum_{t=h+1}^{T}\widehat{u}_{it}\widehat{u}_{j,t-h}, & \text{ if } h\geq 0\\
\frac{1}{T}\sum_{t=1}^{T+h}\widehat{u}_{it}\widehat{u}_{j,t-h}, & \text{ if } h<0
	\end{cases},\quad \text{ and } \widetilde{\sigma}_{h,ij} =  \begin{cases}
	\widetilde{R}_{h,ii}, & \text{if} \;\; i=j \\
	s_{ij}(\widetilde{R}_{h,ij}), & \text{if} \;\; i \neq j,
	\end{cases}
	$$  
	where $s_{ij}(\cdot) : \mathbb{R} \rightarrow \mathbb{R}$ is a ``soft-thresholding function" with an entry dependent threshold $\tau_{ij}$ such that 
	\begin{equation*} \label{e.3.6}
	s_{ij}(z) = \text{sgn}(z)(|z| - \tau_{ij})_{+},\\
	\end{equation*}
	where $(x)_{+} = x$ if $x\geq 0$, and zero otherwise. Here $\text{sgn}(\cdot)$  denotes the sign function, and other thresholding functions, e.g., hard thresholding, are possible.
	For the threshold value, we specify 
	\begin{equation*} \label{e.3.7}
	\tau_{ij} = M\gamma_{T}\sqrt{|\widetilde{R}_{0,ii}|\;|\widetilde{R}_{0,jj}|},
	\end{equation*}
	for some pre-determined value $M>0$, where $\gamma_{T} = \sqrt{\frac{\log(LN)}{T}}$ is such that $\max_{h \leq L}\max_{i,j \leq N}|\widetilde{R}_{h,ij}-Eu_{it}u_{i,t-h}| = O_{P}(\gamma_{T})$. Note that the constant thresholding parameter could be allowed as \cite{bickel2008a}. In practice, however, it is more desirable to have entry dependent threshold, $\tau_{ij}$. $M$ can be chosen by multifold cross-validation, which is explained in Section \ref{tunning}. 
	Then define
	\begin{equation} \label{e.3.8}
	\widetilde{\Omega}_{h} = (\widetilde{\sigma}_{h,ij})_{N \times N}.
	%\widetilde{\Omega}_{h} = \begin{cases}
	%(\widetilde{\sigma}_{h,ij})_{N \times N}, & \text{if} \;\; h \geq 0\\
	%(\widetilde{\sigma}_{-h,ij})_{N \times N}, & \text{if} \;\; h < 0.
	%\end{cases}
	\end{equation}
	Next, we define the   $(t, s)$th  block $\widehat{\Omega}_{t,s}$ as an $N \times N$ matrix: for $h=t-s$,
	\begin{equation*}  \label{e.3.9}
	\widehat{\Omega}_{t,s} =  \begin{cases}
	\omega(|h|,L)\widetilde{\Omega}_{h}, & \text{if} \;\; |h| \leq L\\
	0, & \text{if} \;\; |h| > L.
	\end{cases}.    
	\end{equation*}
	Here $\omega(h,L)$ is the kernel function (see \cite{andrews1991} and  \cite{newey1994}). We let $\omega(h,L) = 1-h/(L+1)$ be the Bartlett kernel function, where $L$ is the bandwidth. In addition, the choice of $L$ is detailed in Section \ref{tunning}. Our final estimator of $\Omega$ is an $NT\times NT$ matrix:
	$$
	\widehat{\Omega}=(\widehat{\Omega}_{t,s}).
	$$ Here $\widehat{\Omega}$ is a nonparametric estimator, which does not require an assumed parametric structure on $\Omega$. Note that, for the large sample size, the proposed estimator may require a huge computational cost due to use of an $NT \times NT$ matrix. 
	
	Finally, given $\widehat{\Omega}$, we propose the feasible GLS (FGLS) estimator of $\beta$ as
	\begin{equation*} 
	\widehat{\beta}_{FGLS} = [X'\widehat{\Omega}^{-1}X]^{-1}X'\widehat{\Omega}^{-1}Y. 
	\end{equation*}

	\begin{remark}[Universal thresholding]
	We apply thresholding separately to  the $N\times N$ blocks, $ (\widetilde{\sigma}_{h,ij})_{N \times N}$, which are  estimated lagged blocks for $Eu_{t}u_{t-h}: h=0,1,2,...$. 	This allows the cluster-membership to be potentially changing over-time, that is, the identities of zeros and nonzero elements of   $Eu_{t}u_{t-h}$ can change over $h$.   If it is known that the  cluster-membership (i.e., identities  of nonzero elements) is time-invariant, then  one  would   set $\widetilde\sigma_{h,ij}=0$ if $\max_{h\leq L}|\widetilde R_{h,ij}|\leq \tau_{ij}$ for $i\neq j$.  This potentially would increase the finite sample accuracy of identifying the cluster-membership. 
	\end{remark}

	\subsubsection{Choice of tuning parameters} \label{tunning}
	Our suggested covariance matrix estimator, $\widehat{\Omega}$, requires the choice of tuning parameters $L$ and $M$, which are the bandwidth and the threshold constant respectively. We write $\widehat{\Omega}(M,L)=\widehat{\Omega}$, where the covariance estimator depends on $M$ and $L$. First, to choose the bandwidth $L$, we suggest using $L^{*} = 4(T/100)^{2/9}$, which is proposed by \cite{newey1994}. For a small size of $T$, we also recommend $L \leq 3$.

The thresholding constant, $M$, can be chosen through multifold cross-validation. % (e.g.,  \cite{bickel2008a} and \cite{fan2013large}). After obtaining the estimated $N \times 1$ vector residuals $\widehat{u}_{t}$ by OLS, we split them into two subsets, denoted by $\{\widehat{u}_{t}\}_{t \in J_{1}}$ and $\{\widehat{u}_{t}\}_{t \in J_{2}}$; $J_1$ represents the training data set, and $J_2$ represents the validation data set.  Let $T(J_{1})$ and $T(J_{2})$ be the sizes of $J_{1}$ and $J_{2}$, which satisfy $T(J_{1})+ T(J_{2}) = T$ and $T(J_{1})\asymp T$. For example, as suggested by \cite{bickel2008a}, we can choose $T(J_{1})=T (1-\log(T)^{-1})$ and $T(J_{2})=T/\log(T)$. 
	%Due to the nature of time series data, we use a consecutive block for the validation set, so that the serial correlation is not perturbed. 
We  randomly split the data $P$ times. We divide the data into $P=\log(T)$ blocks $J_1,...,J_P$ with block length $T/\log(T)$ and take one of the $P$ blocks as  the validation set.  	 At the $p$th split, we denote by $\widetilde{\Omega}_{0}^{p}$ the sample covariance matrix based on the validation set, defined by
	$\widetilde{\Omega}_{0}^{p} =  |J_{p}|^{-1}\sum_{t \in J_{p}}\widehat{u}_{t}\widehat{u}_{t}'$.
	Let $\widetilde{\Omega}_{0}^{S,p}(M)$ be the  thresholding estimator with threshold constant $M$ using the training data set $\{\widehat{u}_{t}\}_{t \notin J_{p}}$. 
	Finally, we choose the constant $M^{*}$ by minimizing the cross-validation objective function
	
	\begin{equation*}
	M^* = \arg \min_{c<M<\bar C}\frac{1}{P}\sum_{j=1}^{P}\|\widetilde{\Omega}_{0}^{S,p}(M)-\widetilde{\Omega}_{0}^{p}\|_{F}^2,
	\end{equation*}
	where $\bar C$ is a large constant such that $\widetilde{\Omega}_{0}^{S}(\bar C)$ is a diagonal matrix, and  $c$ is a  constant that guarantees the positive definiteness of $\widehat{\Omega}(M,L)$ for $M>c$: 
	for each fixed $L$,
	$$
	c= \inf[M>0: \lambda_{\min}\{\widehat{\Omega}(C,L)\}>0, \forall C>M].
	$$
	Here $\widetilde{\Omega}_{0}^{S}(M)$ is the soft-thresholded estimator as defined in the equation (\ref{e.3.8}). Then the resulting estimator of $\Omega$ is $\widehat \Omega(M^*,L^*)$.%Note that $c$ exists because for large enough $C$, $\widehat\Omega(C, L)$ is diagonal. 

	\subsection{The effect of $\widehat{\Omega}^{-1}-\Omega^{-1}$} \label{sec3.2}
	
	A key step of proving the asymptotic property for $\widehat{\beta}_{FGLS}$ is to show that it is asymptotically equivalent to $\widetilde{\beta}_{GLS}^{inf}$, that is:
	\begin{equation} \label{e.3.11}
	 \frac{1}{\sqrt{NT}}X'(\widehat{\Omega}^{-1}-\Omega^{-1})U = o_{P}(1).
	\end{equation} 
	In the usual low-dimensional settings that involve estimating optimal weight matrix, such as the optimal GMM estimations, it has been well known that   consistency for the   inverse covariance matrix estimator is sufficient for the first-order asymptotic theory, e.g., \cite{hansen1982}, \cite{newey1990efficient}, \cite{newey1994large}. It turns out, when the covariance matrix is of high-dimensions, not even the optimal convergence rate of $\|\widehat{\Omega} - \Omega\|$ is sufficient. In fact, proving equation (\ref{e.3.11}) is a very challenging problem.  	In the general case when both cross-sectional and serial correlations are present, our strategy is to use a careful expansion for $\frac{1}{\sqrt{NT}}X'(\widehat{\Omega}^{-1}-\Omega^{-1})U$. We shall proceed in two steps:\\~
	Step 1: Show that $\frac{1}{\sqrt{NT}}X'(\widehat{\Omega}^{-1}-\Omega^{-1})U = \frac{1}{\sqrt{NT}}W'(\widehat{\Omega}-\Omega)\varepsilon + o_{P}(1),$ where $W = \Omega^{-1}X$, and $\varepsilon = \Omega^{-1}U$.\\~
	Step 2: Show that $\frac{1}{\sqrt{NT}}W'(\widehat{\Omega}-\Omega)\varepsilon = o_{P}(1)$.\\~
	
	Now we suppose $\omega(h,L) =1, \Omega \approx \Omega^{NW}$ and let $A_{b_h} = \{(i,j) : |Eu_{it}u_{j,t-h}| \neq 0\},  A_{s_h} = \{(i,j) : |Eu_{it}u_{j,t-h}| = 0\}$. As for  Step 2, we shall show,  	
	\begin{equation} \label{e.3.16}
	\frac{1}{\sqrt{NT}}W'(\widehat{\Omega}-\Omega)\varepsilon \thickapprox \frac{1}{\sqrt{NT}}\sum_{|h| \leq L}\sum_{i,j \in A_{b_h}}\sum_{t=h+1}^{T}w_{it}\varepsilon_{j,t-h}\frac{1}{T}\sum_{s=h+1}^{T}(u_{is}u_{j,s-h}-Eu_{it}u_{j,t-h}).
	\end{equation}
	Here $w_{it}$ is defined such that, we can write $W = (w_1',\cdots, w_T')'$ with  $w_t$ being an $N\times d$ matrix of  $w_{it}$; $\varepsilon_{it}$ is defined similarly. 
	We then further argue that the right hand side of (\ref{e.3.16}) is $o_{P}(1)$ by applying a high-level Assumption \ref{assum4}, which essentially saying the right hand side of (\ref{e.3.16}) is $o_P(1)$.
	
	To appreciate the need of this high-level condition, let us consider a simple example as follows.

	\textbf{A simple example.} To illustrate the key technical issue,  consider  a simple and ideal case where $u_{it}$ is known, and independent across both $i$ and $t$, but with cross-sectional heteroskedasticity. In this case, the covariance matrix of the $NT \times 1$ vector $U$ is a diagonal matrix, with diagonal elements $\sigma_{i}^2 = Eu_{it}^2$:
	$$
	\Omega = 
	\begin{pmatrix} 
	D & && \\
	& D&&\\
	&&\ddots&\\
	&&&D 
	\end{pmatrix},
	\text{ where }
	D = 
	\begin{pmatrix} 
	\sigma_{1}^2 & && \\
	& \sigma_{2}^2&&\\
	&&\ddots&\\
	&&&\sigma_{N}^2 
	\end{pmatrix}.
	$$
	Then a natural estimator for $\Omega$ is
	$$
	\widehat{\Omega} = 
	\begin{pmatrix} 
	\widehat{D} & && \\
	& \widehat{D}&&\\
	&&\ddots&\\
	&&&\widehat{D} 
	\end{pmatrix},
	\text{ where }
	\widehat{D} = 
	\begin{pmatrix} 
	\widehat{\sigma}_{1}^2 & && \\
	& \widehat{\sigma}_{2}^2&&\\
	&&\ddots&\\
	&&&\widehat{\sigma}_{N}^2 
	\end{pmatrix},
	$$
	and $\widehat{\sigma}_{i}^2=\frac{1}{T}\sum_{t=1}^{T}u_{it}^2$, because $u_{it}$ is known. Then the  GLS becomes:
	\begin{equation*} \label{e.3.12}
	(\frac{1}{NT}\sum_{i=1}^{N}\sum_{t=1}^{T}x_{it}x_{it}'\widehat{\sigma}_{i}^{-2})^{-1}\frac{1}{NT}\sum_{i=1}^{N}\sum_{t=1}^{T}x_{it}y_{it}\widehat{\sigma}_{i}^{-2}.
	\end{equation*}
	A key step is to prove that the effect of estimating $D$ is asymptotically negligible: 
	\begin{equation*} \label{e.3.13}
	\frac{1}{\sqrt{NT}}\sum_{i=1}^{N}\sum_{t=1}^{T}x_{it}u_{it}(\widehat{\sigma}_{i}^{-2}-\sigma_{i}^{-2}) = o_{P}(1).
	\end{equation*}
	It can be shown that the problem reduces to proving: $$A \equiv \frac{1}{\sqrt{NT}}\sum_{i=1}^{N}\sum_{t=1}^{T}x_{it}u_{it}\sigma_{i}^{-2}(\frac{1}{T}\sum_{s=1}^{T}(u_{is}^2-Eu_{is}^2)) {\sigma}_{i}^{-2} = o_{P}(1).$$   In fact, straightforward calculations yield
	\begin{equation*} \label{e.3.14}
	EA =   \frac{\sqrt{NT}}{T}\frac{1}{NT}\sum_{i=1}^{N}\sum_{t=1}^{T}E(x_{it}E(u_{it}^3|x_{it}))\sigma_{i}^{-4}.
	\end{equation*}
	Generally, if $u_{it}|x_{it}$ is non-Gaussian and asymmetric, $E(u_{it}^3|x_{it}) \neq 0$.   Hence we require  $N/T \rightarrow 0$ to have $EA \rightarrow 0$.    Hence, to allow for non-Gaussian and asymmetric conditional distributions, in  the GLS setting it turns out $N=o(T)$ is required.
	
	 We shall not explicitly impose  $N=o(T)$ in this paper  as a formal assumption, but instead impose  Assumption \ref{assum4}.
	 On one hand, 	when the distribution of $u_{it}$ is symmetric, we \textit{do not} require $N=o(T)$ because as is shown in the above example, $E(u_{it}^3|x_{it})=0$ is sufficient and holds for symmetric distributions. On the other hand, when $u_{it}$ is non-symmetric, 
	 Assumption \ref{assum4}  then  implicitly requires $N=o(T)$. 
	 Note that $N=o(T)$ is a strong assumption in many microeconomic applications for panel data models. But as illustated in the above simple example, if $u_{it}|x_{it}$ is not symmetric, it is required for feasible GLS even if $\Omega$ is diagonal. One possible approach to weakening this assumption is to remove the higher order bias from $\widehat\Omega$.  Higher order debiasing is a complicated procedure in the presence of general weak dependences. This is left for future research. 
	
	%Next, even if   $N=o(T)$, the problem is still non-trivial. 
%	The optimal rate of convergence under the operator norm (see \cite{cai2012optimal}) is 
%	\begin{equation*} \label{e.3.15}
%	\max_i|\widehat{\sigma}_{i}^{-2}-\sigma_{i}^{-2}|=\|\widehat{\Omega}^{-1}-\Omega^{-1}\| = O_{P}(\sqrt{\frac{\log N}{T}}). 
%	\end{equation*}
%	Then to apply the optimal rate above, we would use the following bound:
%	\begin{align*}
%	\left\|\frac{1}{\sqrt{NT}}X'(\widehat{\Omega}^{-1}-\Omega^{-1})U \right\| & \leq \frac{1}{\sqrt{NT}}\|X\|\|\widehat{\Omega}^{-1}-\Omega{-1}\|\|U\|,
%	\end{align*}
%	which is not sufficient to prove that it is $o_{P}(1)$ even if $N/T \rightarrow 0$ when $\|X\|=O_{P}(\sqrt{NT}) =\|U\|$ in the current context. Hence, a more careful analysis is required to prove (\ref{e.3.11}).  

	\subsection{Asymptotic results of FGLS}\label{sec3.3}
	We impose the following conditions, regulating the sparsity and serial weak dependence.
	\begin{assum} \label{assum1}
		(i) $\{u_{t},x_{t}\}_{t\geq 1}$ is strictly stationary. In addition, each $u_{t}$ has zero mean vector, and $\{u_{t}\}_{t\geq 1}$ and $\{x_{t}\}_{t\geq 1}$ are independent.\\
		(ii) There are constants $c_{1}, c_{2}> 0$ such that $\lambda_{\min}(\Omega_{h})>c_{1}$ and $\|\Omega_{h}\|_{1} < c_{2}$ for each fixed $h$.\\
		(iii) Exponential tail: There exist $r_{1}, r_{2}>0$ and $b_{1}, b_{2} > 0$, and for any $s > 0, i\leq N$ and $l \leq d$,  
		$$P(|u_{it}| > s) \leq exp(-(s/b_{1})^{r_1}),\quad P(|x_{it,l}|>s) \leq \exp(-(s/b_2)^{r_2}).$$
		(iv) Strong mixing: There exist $\kappa\in(0,1)$ such that $ r_{1}^{-1}+r_{2}^{-1}+\kappa^{-1}>1$, and $C>0$ such that for all $T>0$, 		$$
		\sup\limits_{A\in \mathcal{F}_{-\infty}^0, B \in \mathcal{F}_{T}^{\infty}}|P(A)P(B)-P(AB)| < \exp(-CT^{\kappa}),
		$$
		where $\mathcal{F}_{-\infty}^0$ and $\mathcal{F}_{T}^{\infty}$ denote the $\sigma$-algebras generated by $\{(x_{t},u_{t}) : t \leq 0\}$ and $\{(x_{t},u_{t}) : t \geq T\}$ respectively.
	\end{assum}

	Condition (ii) requires that $\Omega_{h}$ be well conditioned. Condition (iii) ensures the Bernstein-type inequality for weakly dependent data, which requires
	the underlying distributions to be thin-tailed. Condition (iv) is the standard $\alpha$-mixing condition, adapted to the large-$N$ panel. In addition, we impose the following regularity conditions. 
	\begin{assum} \label{assum2}
		(i) There exists a constant $C>0$ such that for all $i\leq N$ and $t\leq T$, $E\|x_{it}\|^{4}<C$ and $Eu_{it}^{4}<C$. \\%$\max_{|h| \leq L}\max_{i,j \leq N}\var(x_{it}u_{j,t-h}) < C$ for a positive constant $C$.\\
		(ii) Define $\xi_{T}(L) = \max_{t \leq T}\sum_{|h| > L} \|Eu_tu_{t-h}'\|$. Then  $\xi_{T}(L)\rightarrow 0.$\\
		(iii) Define $f_{T}(L)=\max_{t \leq T}\sum_{|h|\leq L}\|Eu_{t}u_{t-h}'(1-\omega(|h|,L))\|$. Then $f_{T}(L) \rightarrow 0$.
	\end{assum}

	Assumption \ref{assum2} allows us to prove the convergence rate of the covariance matrix estimator. Condition (ii) is an extension of the standard weak serial dependence condition to the high-dimensional case in panel data literature. It allows us to employ banding or Newey-West trunction procedure. Condition (iii) is well satisfied by various kernel functions for the HAC-type estimator. For the Bartlett kernel, for example,
	$$
	\max_{t \leq T}\sum_{|h|\leq L}\|Eu_{t}u_{t-h}'(1-\omega(|h|,L))\| \leq \frac{1}{L}\max_{t \leq T}\sum_{|h|=0}^{\infty}\|Eu_{t}u_{t-h}'\||h|
	$$
	converges to zero as $L\rightarrow \infty$ as long as $\max_{t \leq T}\sum_{|h|=0}^{\infty}\|Eu_{t}u_{t-h}'\||h| < \infty$.
	
	In this paper, we assume $\Omega_{h}$ to be a sparse matrix for each $h$ and impose similar conditions as those in  \cite{bickel2008a} and \cite{fan2013large}: write $\Omega_{h} = (\Omega_{h,ij})_{N\times N}$, where $\Omega_{h,ij}=Eu_{it}u_{j,t-h}$. For some $q \in [0,1)$, we define
	$$
	m_{N}=\max_{|h| \leq L}\max_{i\leq N}\sum_{j=1}^{N}|\Omega_{h,ij}|^q,
	$$
	as a measurement of the sparsity. We would require that  $	m_{N}$ should be either fixed or grow slowly as $N \rightarrow \infty$. In particular, when $q=0$,
	$m_{N} = \max_{|h|\leq L}\max_{i \leq N}\sum_{j=1}^{N}1(\Omega_{h,ij}\neq 0)$, which corresponds to the exact sparsity case.
	
		Let $$\gamma_{T} = \sqrt{\log(LN)/T}.$$
	
	The following theorem shows the convergence rate of the estimated large covariance matrix. For technical simplicity, we assume that there is no fixed effects so that we do not take the de-meaning procedure. Extending to the more complete estimators with de-meaning is straightforward, but should require more technical arguments to show that the effect from added dependences due to the de-meaning is negligible. 
	\begin{thm}\label{thm.1}
		Under the Assumptions \ref{assum1}-\ref{assum2}, when $\|\Omega^{-1}\|_{1} = O(1)$, for $q\in[0,1)$ such that $Lm_{N}\gamma_{T}^{1-q}=o(1)$,
		\begin{equation*}
		\|\widehat{\Omega}-\Omega\| =O_{P}(Lm_{N}\gamma_{T}^{1-q}+ \xi_{T}(L) +f_{T}(L))=\|\widehat{\Omega}^{-1}-\Omega^{-1}\|.
		\end{equation*}	
	 
	\end{thm}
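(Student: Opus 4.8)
The plan is to decompose $\widehat\Omega - \Omega$ according to the three sources of error built into the construction: thresholding error in each $N\times N$ block, banding (Newey--West truncation) error, and kernel-smoothing error. Concretely, I would write $\widehat\Omega - \Omega = (\widehat\Omega - \Omega^{NW,\omega}) + (\Omega^{NW,\omega} - \Omega^{NW}) + (\Omega^{NW} - \Omega)$, where $\Omega^{NW}$ has blocks $\Omega_h$ for $|h|\le L$ and zero otherwise, and $\Omega^{NW,\omega}$ inserts the kernel weights $\omega(|h|,L)$. The last term is controlled by Assumption \ref{assum2}(ii): since the operator norm of a block-banded matrix is bounded by a constant times the maximal row sum of block operator norms, $\|\Omega^{NW} - \Omega\| \le C\,\xi_T(L)$. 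Similarly the middle term is bounded by $C f_T(L)$ via Assumption \ref{assum2}(iii). So the crux is the first term, the estimation error.

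For the estimation term, I would use the standard bound for block-banded matrices: $\|\widehat\Omega - \Omega^{NW,\omega}\| \le C\max_{t}\sum_{|h|\le L}\|\omega(|h|,L)(\widetilde\Omega_h - \Omega_h)\| \le C L \max_{|h|\le L}\|\widetilde\Omega_h - \Omega_h\|$ (using $\omega \le 1$, or more carefully $\sum_{|h|\le L}\omega(|h|,L) \asymp L$). Then each $\|\widetilde\Omega_h - \Omega_h\|$ is a thresholded-sparse-covariance estimation error, for which the Bickel--Levina / Fan-type argument gives $\|\widetilde\Omega_h - \Omega_h\| = O_P(m_N \gamma_T^{1-q})$ uniformly over $|h|\le L$, provided the uniform entrywise rate $\max_{|h|\le L}\max_{i,j}|\widetilde R_{h,ij} - \Omega_{h,ij}| = O_P(\gamma_T)$ holds. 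That uniform rate is where Assumption \ref{assum1}(iii)--(iv) enters: I would establish it by writing $\widetilde R_{h,ij}$ in terms of the true-innovation cross-products $\frac1T\sum_t u_{it}u_{j,t-h}$ plus remainder terms involving $\widehat\beta_{OLS} - \beta$; the leading part is handled by a Bernstein inequality for strong-mixing arrays combined with a union bound over the $O(LN^2)$ entries (whence the $\log(LN)$ in $\gamma_T$), and the OLS-replacement remainder is shown to be of smaller order using $\|\widehat\beta_{OLS} - \beta\| = O_P((NT)^{-1/2})$ and the exponential-tail/moment bounds. Combining, $\|\widehat\Omega - \Omega\| = O_P(L m_N \gamma_T^{1-q} + \xi_T(L) + f_T(L))$.

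Finally, to pass to the inverse I would use the elementary identity $\widehat\Omega^{-1} - \Omega^{-1} = -\widehat\Omega^{-1}(\widehat\Omega - \Omega)\Omega^{-1}$, so $\|\widehat\Omega^{-1} - \Omega^{-1}\| \le \|\widehat\Omega^{-1}\|\,\|\widehat\Omega - \Omega\|\,\|\Omega^{-1}\|$. Under Assumption \ref{assum1}(ii) the block structure gives $\nu_{\min}(\Omega)$ bounded below (hence $\|\Omega^{-1}\| = O(1)$), and since the rate term is $o(1)$ by hypothesis, $\nu_{\min}(\widehat\Omega)$ is bounded below with probability approaching one, so $\|\widehat\Omega^{-1}\| = O_P(1)$. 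This yields $\|\widehat\Omega^{-1} - \Omega^{-1}\| = O_P(\|\widehat\Omega - \Omega\|)$, and the reverse inequality (same order) follows symmetrically, giving the claimed two-sided statement. The condition $\|\Omega^{-1}\|_1 = O(1)$ is not strictly needed for this operator-norm bound but is presumably retained for later use (Step 1/Step 2 of Section \ref{sec3.2}).

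The main obstacle I anticipate is the uniform entrywise rate $\max_{|h|\le L}\max_{i,j}|\widetilde R_{h,ij} - \Omega_{h,ij}| = O_P(\gamma_T)$ when $\widehat\beta_{OLS}$ is plugged in: one must show the cross-terms like $\frac1T\sum_t x_{it}'(\widehat\beta - \beta)\,u_{j,t-h}$ and $\frac1T\sum_t x_{it}'(\widehat\beta-\beta)(\widehat\beta-\beta)'x_{j,t-h}$ are $o_P(\gamma_T)$ uniformly over all $O(LN^2)$ index triples $(h,i,j)$, which requires a uniform (in $i$) control of $\frac1T\sum_t \|x_{it}\|\,|u_{j,t-h}|$-type quantities — handled by the exponential tails and a union bound, but the bookkeeping is the delicate part. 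The block-to-operator-norm reductions and the Bernstein step are comparatively routine given Assumptions \ref{assum1}--\ref{assum2}.
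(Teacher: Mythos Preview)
Your decomposition and overall strategy match the paper's proof closely: the paper also splits into a banding error ($\xi_T(L)$), a kernel error ($f_T(L)$), and a block-thresholding error bounded by $L\max_{|h|\le L}\|\widetilde\Omega_h-\Omega_h\|$, with the latter controlled via the uniform entrywise rate (their Lemma~\ref{lem2}, proved exactly as you outline using Bernstein for mixing arrays plus the OLS plug-in remainder). The block-to-operator-norm reduction you invoke is the paper's Lemma~\ref{lem4} (a block Gershgorin bound).

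There is one genuine error in your inverse step. Assumption~\ref{assum1}(ii) bounds only the eigenvalues of the individual $N\times N$ blocks $\Omega_h$; it does \emph{not} give $\nu_{\min}(\Omega)$ bounded below for the full $NT\times NT$ block-Toeplitz matrix (the smallest eigenvalue of a Toeplitz matrix is governed by the infimum of its spectral symbol, which is not controlled by the diagonal block alone). So your dismissal of the hypothesis $\|\Omega^{-1}\|_1=O(1)$ is wrong: that is precisely where $\|\Omega^{-1}\|=O(1)$ comes from (via $\|A\|\le\|A\|_1$ for symmetric $A$), and it is needed for the inverse bound. The paper's argument for the inverse is a minor variant of yours: instead of first bounding $\|\widehat\Omega^{-1}\|$, it writes $\widehat\Omega^{-1}-\Omega^{-1} = -(\widehat\Omega^{-1}-\Omega^{-1})(\widehat\Omega-\Omega)\Omega^{-1} - \Omega^{-1}(\widehat\Omega-\Omega)\Omega^{-1}$, takes norms, and rearranges to get $(1+o_P(1))\|\widehat\Omega^{-1}-\Omega^{-1}\| = O_P(\|\widehat\Omega-\Omega\|)$. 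Both routes require $\|\Omega^{-1}\|=O(1)$.
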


	The following conditions are required to prove Step 1 in the previous section. 
	
	\begin{assum} \label{assum3}
		For any $NT \times NT$ matrix $M$, we denote $(M)_{ts,ij}$ as the $(i,j)$th element of the $(t,s)$th block of the matrix $M$.\\
		(i) $\sum_{|h|>L}\|\Omega_{h}\|_{1} = O(L^{-\alpha})$,  for a constant $\alpha>0$.\\
		(ii) $\max_{i\leq N,t\leq T}\sum_{s=1}^{T}\sum_{j=1}^{N}|(\Omega^{-1})_{ts,ij}| = O(1)$.\\	
		(iii) There is $q\in [0,1)$ such that   $Lm_{N}\gamma_{T}^{1-q}=o(1)$ holds. In addition,  
		$$
		\sqrt{T}L^{2}m_{N}^2\gamma_{T}^{3-2q} =o(1),\;\;  L^{-\alpha}T\sqrt{NT}m_{N}\gamma_{T}^{1-q} =o(1).
		$$   	
		(iv) Define $\gamma^{*} = Lm_{N}\gamma_{T}^{1-q}+\xi_{T}(L) + f_{T}(L)$. Then $\sqrt{NT}\gamma^{*3}=o(1).$
	\end{assum}

	 Conditions (i)-(ii) require the weak cross-sectional correlations. Conditions (iii)-(iv) are the sparsity assumptions. In addition, the sparsity assumptions assume that $m_{N}$ should not be too large.  
	%Note in the special case of serial independence and finite clusters where each individual $u_{it}$ is correlated with only finite cross-sectional units, $q=0, m_{N}=O(1)$ and  $L=O(1)$, then the condition $\sqrt{NT}\gamma^{*3} \rightarrow 0$ reduces to $N=o(T^2)$.  

	\begin{remark}
		To understand Assumption \ref{assum3}, consider $q=0$ as a simple case. Then conidtions (iii)-(iv) reduce to, for some positive constant $\alpha$,
		\begin{equation*}
		NT^{2}m_{N}^{2}\log(LN) = o(L^{2\alpha}), \\
		\;\; NL^{6}m_{N}^{6}(\log(LN))^{3} = o(T^2).
		\end{equation*}
	%	Implicitly, we require $NT^{2}\log(LN) = o(L^{2\alpha})$ and $NL^{6}(\log(LN))^{3}=o(T^{2})$. 
	If $m_{N}=O(1)$, then these conditions are  simplified to 
	$$
	NT^2\log(LN)=o(L^{2\alpha}),\quad NL^6(\log (LN))^3=o(T^2).
	$$
%	which intuitively  holds for slowly growing $L$ (small bandwidth), and sufficiently large $\alpha$ (serial dependence decays fast) and $N$ grows slower than $T^2$ (up to log terms). 
	
	%$N(\log(LN))^{2}=o(L^{\alpha-3})$. %For example, if $\alpha = 9$, we require $N(\log(LN))^{5}=o(T)$. 
	%	Hence we assume that $\alpha$ is a sufficiently large constant. Generally, we allow $m_{N} \rightarrow \infty$, which implies that the number of nonzero entries may grow slowly as the dimension increases.
	\end{remark}

	\begin{pro} \label{proposition1}
		Under the Assumption \ref{assum1}-\ref{assum2}, for $q \in[0,1)$ and $\alpha>0$ such that Assumption \ref{assum3} holds,
		\begin{equation*} \label{proposition}
		\sqrt{NT}(\widehat{\beta}_{FGLS}-\beta) = \Gamma^{-1}\left(\frac{1}{\sqrt{NT}}X'\Omega^{-1}U\right)+ \Gamma^{-1}\left(\frac{1}{\sqrt{NT}}X'\Omega^{-1}(\widehat{\Omega}-\Omega)\Omega^{-1}U\right)+o_{P}(1),
		\end{equation*}
		where $\Gamma = E(X'\Omega^{-1}X/NT)$. 
	\end{pro}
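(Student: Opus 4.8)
The plan is to start from the definition $\widehat{\beta}_{FGLS} = [X'\widehat{\Omega}^{-1}X]^{-1}X'\widehat{\Omega}^{-1}Y$ and substitute $Y = X\beta + U$, so that $\sqrt{NT}(\widehat{\beta}_{FGLS}-\beta) = [X'\widehat{\Omega}^{-1}X/NT]^{-1}\cdot \frac{1}{\sqrt{NT}}X'\widehat{\Omega}^{-1}U$. I would first dispose of the "denominator" factor by showing $X'\widehat{\Omega}^{-1}X/NT = \Gamma + o_P(1)$: writing $\widehat{\Omega}^{-1}-\Omega^{-1} = -\Omega^{-1}(\widehat{\Omega}-\Omega)\Omega^{-1} + \text{remainder}$ (a standard matrix-inverse expansion valid once $\|\widehat{\Omega}-\Omega\|\cdot\|\Omega^{-1}\| = o_P(1)$, which follows from Theorem \ref{thm.1} and the rate conditions in Assumption \ref{assum3}(iii)), we get $|X'(\widehat{\Omega}^{-1}-\Omega^{-1})X/NT| \le \|X\|^2 \|\widehat{\Omega}^{-1}-\Omega^{-1}\|/NT$, and since $\|X\|^2/NT = O_P(1)$ under Assumption \ref{assum2}(i) and $\|\widehat{\Omega}^{-1}-\Omega^{-1}\| = o_P(1)$, this term vanishes; combined with a law of large numbers for $X'\Omega^{-1}X/NT \to \Gamma$, the denominator is $\Gamma^{-1}+o_P(1)$, which is where $\Gamma = E(X'\Omega^{-1}X/NT)$ and its invertibility (guaranteed by Assumption \ref{assum1}(ii) and Assumption \ref{assum3}(ii)) enter.

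The core of the argument is the "numerator" expansion
$$
\frac{1}{\sqrt{NT}}X'\widehat{\Omega}^{-1}U = \frac{1}{\sqrt{NT}}X'\Omega^{-1}U + \frac{1}{\sqrt{NT}}X'\Omega^{-1}(\widehat{\Omega}-\Omega)\Omega^{-1}U + R_{NT},
$$
where $R_{NT}$ collects the higher-order terms of the inverse expansion. Here I would write the exact identity $\widehat{\Omega}^{-1} = \Omega^{-1} - \Omega^{-1}(\widehat{\Omega}-\Omega)\Omega^{-1} + \Omega^{-1}(\widehat{\Omega}-\Omega)\widehat{\Omega}^{-1}(\widehat{\Omega}-\Omega)\Omega^{-1}$, so that $R_{NT} = \frac{1}{\sqrt{NT}}X'\Omega^{-1}(\widehat{\Omega}-\Omega)\widehat{\Omega}^{-1}(\widehat{\Omega}-\Omega)\Omega^{-1}U$. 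The task is to show $R_{NT} = o_P(1)$. A crude operator-norm bound gives $|R_{NT}| \le \frac{1}{\sqrt{NT}}\|X\|\,\|\Omega^{-1}\|^2\,\|\widehat{\Omega}^{-1}\|\,\|\widehat{\Omega}-\Omega\|^2\,\|U\|$; using $\|X\| = O_P(\sqrt{NT})$, $\|U\| = O_P(\sqrt{NT})$, $\|\Omega^{-1}\|=O(1)$, $\|\widehat{\Omega}^{-1}\| = O_P(1)$, and the Theorem \ref{thm.1} rate $\|\widehat{\Omega}-\Omega\| = O_P(\gamma^*)$ with $\gamma^* = Lm_N\gamma_T^{1-q}+\xi_T(L)+f_T(L)$, this is $O_P(\sqrt{NT}\,\gamma^{*2})$ — which is not quite small enough. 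This is where Assumption \ref{assum3}(iv), $\sqrt{NT}\gamma^{*3} = o(1)$, is designed to help: one gains an extra factor of $\gamma^*$ by exploiting that $\Omega^{-1}U$ is "spread out" rather than bounding $\|U\|$ directly, i.e. by a sharper argument controlling $\|(\widehat{\Omega}-\Omega)\Omega^{-1}U\|$ via the $\ell_1$-type bounds in Assumption \ref{assum3}(i)-(ii) together with the elementwise convergence $\max_{h\le L}\max_{i,j}|\widetilde R_{h,ij} - Eu_{it}u_{j,t-h}| = O_P(\gamma_T)$, rather than the crude operator norm. The main obstacle, then, is precisely this refinement: obtaining the extra power of $\gamma^*$ in $R_{NT}$ so that the bound becomes $O_P(\sqrt{NT}\gamma^{*3}\cdot \text{const}) = o_P(1)$ instead of the insufficient $O_P(\sqrt{NT}\gamma^{*2})$. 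I would handle it by decomposing $\widehat{\Omega}-\Omega$ block-by-block (over lags $|h|\le L$) and within each block over the sparse support $A_{b_h}$ versus its complement, using the thresholding properties as in \cite{bickel2008a} to show the off-support contributions are negligible and the on-support contributions concentrate at rate $\gamma_T$ in sup-norm, then aggregating with the $\ell_1$ bounds on $\Omega^{-1}$; this is the same machinery foreshadowed in the Step 1/Step 2 discussion of Section \ref{sec3.2} and will consume most of the technical effort, while everything else (the denominator, the LLN for $\Gamma$, the leading inverse expansion) is routine.
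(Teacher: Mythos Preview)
Your overall architecture (denominator via Theorem \ref{thm.1}, numerator via an inverse expansion) matches the paper, but there is a genuine gap in how you handle the remainder $R_{NT}$.

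You stop the expansion at second order, leaving $R_{NT} = \frac{1}{\sqrt{NT}}X'\Omega^{-1}(\widehat{\Omega}-\Omega)\widehat{\Omega}^{-1}(\widehat{\Omega}-\Omega)\Omega^{-1}U$, and then hope to ``gain an extra factor of $\gamma^*$'' by a sharper bound on $(\widehat{\Omega}-\Omega)\Omega^{-1}U$, invoking Assumption \ref{assum3}(iv). This does not work as stated, for two reasons. First, the middle factor $\widehat{\Omega}^{-1}$ is random, so the block-by-block concentration you sketch (which ultimately requires a Bernstein-type bound on weighted sums of $\varepsilon_{qm}$ with \emph{deterministic} weights coming from $\Omega^{-1}$; this is the content of Lemma \ref{lem3}) cannot be applied directly. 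Second, even after fixing that, the resulting rate for the second-order piece is not $\sqrt{NT}\gamma^{*3}$: it is governed by $\sqrt{T}L^{2}m_{N}^{2}\gamma_{T}^{3-2q}$ and $L^{-\alpha}T\sqrt{NT}m_{N}\gamma_{T}^{1-q}$, which is precisely what Assumption \ref{assum3}(iii) (not (iv)) is calibrated to kill. You have conflated the roles of (iii) and (iv).

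The paper's device is to push the expansion one step further, writing
\[
\frac{1}{\sqrt{NT}}X'(\widehat{\Omega}^{-1}-\Omega^{-1})U = a + b + c,
\]
with $b = \frac{1}{\sqrt{NT}}W'(\widehat{\Omega}-\Omega)\Omega^{-1}(\widehat{\Omega}-\Omega)\varepsilon$ (deterministic $\Omega^{-1}$ in the middle) and $c = \frac{1}{\sqrt{NT}}X'(\widehat{\Omega}^{-1}-\Omega^{-1})(\widehat{\Omega}-\Omega)\Omega^{-1}(\widehat{\Omega}-\Omega)\varepsilon$ (genuinely third order). For $c$, the crude operator-norm bound already gives $O_P(\sqrt{NT}\gamma^{*3})=o_P(1)$ by Assumption \ref{assum3}(iv). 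For $b$, the deterministic middle factor lets you expand in blocks $b_1+b_2+b_3$ (lags $|h|\le L$ vs.\ $|h|>L$ on each side), apply the $\ell_1$ bound $\max_{|h|\le L}\|\widetilde\Omega_h-\Omega_h\|_1=O_P(m_N\gamma_T^{1-q})$ from Lemma \ref{lem2}(ii), and crucially use Lemma \ref{lem3} to get a $\sqrt{\log(LN)/NT}$ factor from the concentration of $\frac{1}{NT}\sum_{q,m}\varepsilon_{qm}Q_{imp}^{hv}$; this is where the missing power comes from, and it yields the rates in Assumption \ref{assum3}(iii). Your proposal needs exactly this extra split of $\widehat{\Omega}^{-1}$ into $\Omega^{-1}+(\widehat{\Omega}^{-1}-\Omega^{-1})$ before the refined argument can go through.
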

	
	In addition, we impose the following assumption, which allows us to prove that the second term on the right hand side in the above equation  is $o_{P}(1)$.
	\begin{assum} \label{assum4}
		Let $A_{b_h} = \{(i,j) : |Eu_{it}u_{j,t-h}| \neq 0\}$. Then 
		\begin{equation}\label{empirical process}
		\left\|  \frac{1}{\sqrt{NT}}\sum_{h=0}^{L}\sum_{i,j \in A_{b_h}}\mathbb{G}_{T,ij}^{1}(h)\mathbb{G}_{T,ij}^{2}(h) \right\|= o_{P}(1),
		\end{equation}
		 where $\mathbb{G}_{T,ij}^{1}(h) = \frac{1}{\sqrt{T}}\sum_{t=h+1}^{T}(u_{it}u_{j,t-h}-Eu_{it}u_{j,t-h})$ and $\mathbb{G}_{T,ij}^{2}(h) = \frac{1}{\sqrt{T}}\sum_{t=h+1}^{T}w_{it}\varepsilon_{j,t-h}$. 
	\end{assum}
	The right hand side of equation (\ref{e.3.16}) can be written as the equation (\ref{empirical process}).  Then we have the following limiting distribution by using the result of Theorem \ref{thm.1}.
	
	\begin{thm}\label{thm.2}
		Suppose $\var(U|X)=\var(U)=\Omega$. Under the Assumptions \ref{assum1}-\ref{assum4}, for $q \in[0,1)$ and $\alpha>0$ such that Assumption \ref{assum3} holds, as $N, T \rightarrow \infty$, 
		\begin{equation*}
		\sqrt{NT}(\widehat{\beta}_{FGLS}-\beta) \overset{d}{\to} \mathcal{N}(0,\Gamma^{-1}),
		\end{equation*}
		where $\Gamma = E(X'\Omega^{-1}X/NT)$. The consistent estimator of $\Gamma$ is $\widehat{\Gamma} = X'\widehat{\Omega}^{-1}X/NT$. 
	\end{thm}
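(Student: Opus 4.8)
The plan is to assemble Theorem \ref{thm.2} from the building blocks already established in the excerpt, treating it essentially as a corollary of Proposition \ref{proposition1}, Theorem \ref{thm.1}, and Assumption \ref{assum4}, together with a central limit theorem for the leading term. First I would invoke Proposition \ref{proposition1} to write
\[
\sqrt{NT}(\widehat{\beta}_{FGLS}-\beta) = \Gamma^{-1}\Big(\frac{1}{\sqrt{NT}}X'\Omega^{-1}U\Big)+ \Gamma^{-1}\Big(\frac{1}{\sqrt{NT}}X'\Omega^{-1}(\widehat{\Omega}-\Omega)\Omega^{-1}U\Big)+o_{P}(1).
\]
The bulk of the argument is then to show the second (linearization-remainder) term is $o_P(1)$. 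This is where the two-step strategy from Section \ref{sec3.2} is carried out in earnest: writing $W=\Omega^{-1}X$ and $\varepsilon=\Omega^{-1}U$, I would first argue (Step 1, justified under Assumptions \ref{assum1}--\ref{assum3}, using $\|\Omega^{-1}\|_1=O(1)$ and the rate from Theorem \ref{thm.1}) that $\frac{1}{\sqrt{NT}}X'\Omega^{-1}(\widehat\Omega-\Omega)\Omega^{-1}U = \frac{1}{\sqrt{NT}}W'(\widehat\Omega-\Omega)\varepsilon + o_P(1)$; the $o_P(1)$ here comes from controlling cross terms via the operator-norm bound $\|\widehat\Omega-\Omega\|=O_P(\gamma^*)$ squared against $\|\Omega^{-1}\|$, which is exactly what Assumption \ref{assum3}(iii)--(iv) is calibrated to kill. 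Then (Step 2) I would expand $\frac{1}{\sqrt{NT}}W'(\widehat\Omega-\Omega)\varepsilon$ using the banding-plus-thresholding structure of $\widehat\Omega$: the thresholding error on the sparse off-support entries contributes $O_P(\sqrt{NT}\,m_N\gamma_T^{1-q}\cdot(\text{extra }L,\xi_T,f_T\text{ factors}))$ which vanishes by Assumption \ref{assum3}(iii)--(iv), the kernel/bandwidth truncation bias contributes terms controlled by $\xi_T(L)$ and $f_T(L)$, and what remains is precisely the empirical-process object in \eqref{empirical process}, which is assumed $o_P(1)$ by Assumption \ref{assum4} (after symmetrizing to include negative lags via $\Omega_{-h}=\Omega_h'$).

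Having disposed of the remainder term, the second part of the proof is the CLT for the leading term $\frac{1}{\sqrt{NT}}X'\Omega^{-1}U$. Under $\var(U\mid X)=\Omega$, this vector has conditional mean zero and conditional variance $\frac{1}{NT}X'\Omega^{-1}\var(U\mid X)\Omega^{-1}X = \frac{1}{NT}X'\Omega^{-1}X \to \Gamma$ in probability (using Assumption \ref{assum1}(i) independence of $x$ and $u$, stationarity, and Assumption \ref{assum3}(ii) which controls the $\ell_1$ structure of $\Omega^{-1}$ so that a law of large numbers applies). I would then verify a Lindeberg-type condition for this triangular array — the summands are $\frac{1}{\sqrt{NT}}\sum_t x_t'(\Omega^{-1}U)_t$, which under the exponential-tail Assumption \ref{assum1}(iii), the moment bounds \ref{assum2}(i), the $\alpha$-mixing \ref{assum1}(iv), and the bounded-$\ell_1$ condition on $\Omega^{-1}$ form a suitably weakly dependent array — and conclude by a CLT for mixing arrays (e.g.\ a Bernstein-block / Cramér–Wold argument) that $\frac{1}{\sqrt{NT}}X'\Omega^{-1}U \overset{d}{\to}\mathcal N(0,\Gamma)$. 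Combining with $\Gamma^{-1}$ and Slutsky gives $\Gamma^{-1}\mathcal N(0,\Gamma)\Gamma^{-1} = \mathcal N(0,\Gamma^{-1})$, which is the claimed limit. The consistency claim $\widehat\Gamma = X'\widehat\Omega^{-1}X/NT \overset{p}{\to}\Gamma$ follows from $\|\widehat\Omega^{-1}-\Omega^{-1}\|=O_P(\gamma^*)=o_P(1)$ (Theorem \ref{thm.1}) sandwiched against $\frac{1}{NT}\|X\|_F^2=O_P(1)$, plus $\frac{1}{NT}X'\Omega^{-1}X\overset{p}{\to}\Gamma$.

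I expect the main obstacle to be Step 2 of the remainder analysis — specifically, reducing $\frac{1}{\sqrt{NT}}W'(\widehat\Omega-\Omega)\varepsilon$ down to exactly the empirical-process quantity in \eqref{empirical process} while showing every other piece is negligible. The delicate accounting is that $\widehat\Omega$ differs from $\Omega$ through three distinct mechanisms (sampling error in $\widetilde R_{h,ij}$, soft-thresholding shrinkage/false-support errors, and Bartlett-kernel downweighting plus hard banding at $|h|>L$), and each must be bounded by a different combination of $L$, $m_N$, $\gamma_T$, $\xi_T(L)$, $f_T(L)$; the sharp $\sqrt{NT}$-scaling means one cannot afford loose bounds, and the use of operator norm versus entrywise versus $\ell_1$ norm must be chosen correctly at each stage. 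A secondary subtlety is the replacement of the residual-based $\widehat u_{it}$ in $\widetilde R_{h,ij}$ by the true $u_{it}$ inside the empirical process — the OLS estimation error $\widehat\beta_{OLS}-\beta = O_P((NT)^{-1/2})$ propagates into $\widehat\Omega$ and one must check this contamination is also $o_P(1)$ after the $\sqrt{NT}$ scaling, which again relies on the sparsity-rate conditions in Assumption \ref{assum3}. Everything else — the CLT and the consistency of $\widehat\Gamma$ — is comparatively routine given the stated assumptions.
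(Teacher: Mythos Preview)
Your proposal is correct and follows essentially the same route as the paper: invoke Proposition \ref{proposition1}, show the linearization remainder $\frac{1}{\sqrt{NT}}W'(\widehat\Omega-\Omega)\varepsilon$ reduces to the empirical-process quantity of Assumption \ref{assum4} up to $o_P(1)$, then apply a CLT to the leading term and Slutsky.

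One small confusion worth flagging: your ``Step 1'' after invoking Proposition \ref{proposition1} is unnecessary. The identity $\frac{1}{\sqrt{NT}}X'\Omega^{-1}(\widehat\Omega-\Omega)\Omega^{-1}U = \frac{1}{\sqrt{NT}}W'(\widehat\Omega-\Omega)\varepsilon$ is \emph{exact} by the definitions $W=\Omega^{-1}X$, $\varepsilon=\Omega^{-1}U$; there is no $o_P(1)$ to control here. The higher-order cross-term analysis you describe (operator-norm of $\widehat\Omega-\Omega$ squared against $\|\Omega^{-1}\|$, calibrated by Assumption \ref{assum3}(iii)--(iv)) is precisely what the proof of Proposition \ref{proposition1} already handles (its terms $b$ and $c$), so once you cite that proposition you should proceed directly to your Step 2. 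The paper's proof of Theorem \ref{thm.2} is correspondingly short: it only needs to justify the approximation \eqref{e.3.16} and invoke Assumption \ref{assum4}. Your account of Step 2 (splitting into thresholding error on the off-support, kernel/banding bias via $\xi_T(L)$ and $f_T(L)$, and the residual-based $\widehat u$ contamination) and your CLT and $\widehat\Gamma$-consistency arguments are more detailed than what the paper spells out, but correctly identify the ingredients.
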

	The asymptotic variance of the FGLS estimator is $\text{Avar}(\widehat{\beta}_{FGLS})= \Gamma^{-1}/NT$, and an estimator of it is $(X'\widehat{\Omega}^{-1}X)^{-1}$. Asymptotic standard errors can be obtained in the usual fashion from the asymptotic variance estimates.

	\section{Monte Carlo evidence} \label{simulation}
	\subsection{DGP and methods}
	In this section we compare the proposed FGLS estimator with OLS estimator.  We consider the fixed effect linear regression model, although this paper focuses on the simple linear model for technical simplicity. Hence the de-meaning procedure is applied first. The data generating process (DGP) used for the simulations is given by
	\begin{equation*}
	y_{it} = \alpha_{i} + \mu_{t}+\beta_{0}x_{it} + u_{it},
	\end{equation*}
	where the true $\beta_{0} =1$ and fixed effects $\alpha_{i}, \mu_{t}$ are generated from $\mathcal{N}(0,0.5)$.  The DGP allows for serial and cross-sectional correlation in both $x_{it}$ and $u_{it}$, which are generated by $(NT) \times (NT)$ covariance matrices, $\Omega_{X}$ and $\Omega_{U}$, as follows: let $R_{\eta} = (R_{\eta,ij})$ denote an $N \times N$ block diagonal correlation matrix. We fix the number of clusters as $G=25$. Hence, each diagonal block is a $N/G \times N/G$ matrix with the off-diagonal entries $(i,j)$ in the same cluster, $R_{\eta,ij}$ for $i \neq j$, which are generated from  i.i.d. Uniform$(0,\gamma)$. In this study, we set the level of cross-sectional correlation in each cluster as $\gamma=0.3,$ or $0.7$.  For the cross-sectional heteroskedasticity, let $D=\text{diag}\{d_{i}\}$, where $\{d_{i}\}_{i \leq N}$ are i.i.d. Uniform(1,$m$). Finally, we define the $N \times N$ covariance matrix of $u_{t}$ as $\Sigma_{u}=DR_{\eta}D$. In this case, we report results when $m=\sqrt{5}$. For the covariance matrix of the regressor, we simply set $\Sigma_{x} = R_{\eta}$, which does not have heteroskedasticity. 
	
	Now we introduce $i$-dependent serial correlation for the regressor and the error as follows: first let $\sigma_{ii} = \rho_{i}$ if $i=j$ and $\sigma_{ij} = \rho_{i}\rho_{j}$ if $i\neq j$. Then we define the $(NT) \times (NT)$ covariance matrix, $\Omega_{U} = (\Omega_{t,s})$. The $(t,s)$th block is an $N \times N$ covariance matrix, given by $\Omega_{t,s}= (\Omega_{t,s}(i,j))$, where $\Omega_{t,s}(i,j) = \Sigma_{u,ij}\sigma_{ij}^{|t-s|}$. The large covariance matrix of the regressor, $\Omega_{X}$, is generated similarly. The level of $i$-dependent $\rho_{i}$ of the regressor and the error is generated from i.i.d. Uniform$(0,0.6)$, seperately.
	 
	 Note that the $(t,s)$th block covariance decays exponentially as $|t-s|$ increases. Finally we generate the $NT \times 1$ vectors $(u_{1}',...,u_{T}')' = \Omega_{U}^{1/2}\zeta$, where $\zeta$ is an $NT \times 1$ vector, whose entries are generated from i.i.d. $\mathcal{N}(0,5)$. Similarly, the regressor is generated by  $(x_{1}',...,x_{T}')' = \Omega_{X}^{1/2}\xi$, where $\xi$ is an $NT \times 1$ vector, whose entries are generated from i.i.d. $\mathcal{N}(0,1)$. Note that $x_{it}$ is uncorrelated with $u_{it}$.

	 In this numerical study, we use sample sizes $N=50, 100$ and $T=50, 100, 150$, and the simulation is replicated for one thousand times in all cases.\footnote{The procedure of proposed estimators require use of an $NT \times NT$ matrix as discussed in Section \ref{implemenation}. Indeed, when $NT$ is large, the procedure appears to be computationally demanding. Hence, we focus on the small sample size in this study.} For each $\{N, T\}$ combination, we set the bandwidth $L = 3$ in all cases. The threshold constant, $M$, is obtained by the cross-validation method as suggested in Section \ref{tunning}. For instance, when $T=100$, the number of folds to split is $\log(100) \approx 5$. In general, the cross-validation chooses $M$ between 1.4 and 1.8. Interestingly, as the level of cross-section correlation increases, the cross-validation tends to choose smaller M, so that the number of non-thresholded elements increases. Hence it takes into account the strength of cross-sectional correlation. We use the Bartlett kernel for our FGLS estimator. Results are summarized in Tables \ref{weak}-\ref{strong}.

	\subsection{Results}
	Tables \ref{weak}-\ref{strong} present the simulation results, where each table corresponds to a different level of cross-sectional correlation, $\gamma = \{0.3, 0.7\}$. In each table, the mean and standard deviation of the estimators are reported. FGLS(Diag) refers to the FGLS estimator using the diagonal covariance matrix, which only takes into account heteroskedasticity. RMSE is the ratio of the mean squared error of FGLS to that of OLS. The mean and standard deviation of the estimated standard errors for OLS and FGLS are also reported. The robust unknown clustered standard error, suggested by \cite{bai2019olsse}, is used for OLS. For FGLS, we report the results of the standard error as introduced in Theorem \ref{thm.2}. The difference between the standard deviation of the estimators and the mean of standard errors can be explained as the bias of estimated standard errors. In addition, we present null rejection probabilities for the 5\% level tests using the traditional $\mathcal{N}(0,1)$ critical value based on each standard errors. 
		
	According to Tables \ref{weak}-\ref{strong}, we see that both methods are almost unbiased, while our proposed FGLS has indeed smaller standard deviation of $\widehat{\beta}$ than that of OLS and FGLS(Diag). In all cases, the RMSE of our proposed FGLS is significantly smaller than one. Hence the results confirm that the FGLS estimator is more efficient than the OLS and the FGLS(Diag) estimators in presence of heteroskedasticity, serial and cross-sectional correlations. Regarding the $t$-test, in Table \ref{weak}, the rejection probabilities of FGLS and OLS are close to 0.05 when $T$ is large, while those of FGLS(Diag) tend to over-reject. Since the FGLS(Diag) estimator does not take into account the serial and the cross-sectional correlations, its standard errors are underestimated. On the other hand, in Table \ref{strong}, we find that the standard errors of all estimators are underestimated and the t-test rejection probabilities are much larger than 0.05, especially when $T$ is relatively smaller than $N$ (e.g., $N=100$ and $T=50$). This is due to the strong cross-sectional correlation within clusters. However, the rejection probabilities of FGLS and OLS are much smaller than those of FGLS(Diag). 
	In summary, FGLS does improve efficiency in terms of mean squared error; also we obtain unbiased standard error estimator and appropriate rejection rate as $T$ increases.

	\section{Empirical study: Effects of divorce law reforms on divorce rates} \label{application}

	In the literature, the cause of the sharp increase in the U.S. divorce rate in the 1960-1970s is an important research question. During 1970s, more than half of states in the U.S. liberalized the divorce system, and the effects of reforms on divorce rates have been investigated by many  such as \cite{allen1992marriage} and \cite{peters1986marriage}. With controls for state and year fixed effects, \cite{friedberg1998} suggested that state law reforms siginificantly increased divorce rates. Also, she assumed that unilateral divorce laws affected divorce rates permanently. However, divorce rates from 1975 have been subsequently decreasing according to empirical evidence. Therefore the question of whether law reforms also affect the divorce rate decrease has arisen. \cite{wolfers2006did} revisited this question by using a treatment effect panel data model, and identified only temporal effects of reforms on divorce rates. In particular, he used dummy variables for the first two years after the reforms, 3-4 years, 5-6 years, and so on. More specifically, the following fixed effect panel data model was considered:
	\begin{equation}
	y_{it} =  \alpha_{i} + \mu_{t} +  \sum_{k=1}^{8}\beta_{k}X_{it,k} + \delta_{i}t + u_{it},
	\end{equation}
	where $y_{it}$ is the divorce rate for state $i$ and year $t$, $\alpha_{i}$ a state fixed effect, $\mu_{t}$ a time fixed effects, and $\delta_{i}t$ a linear time trend with unknown coefficient $\delta_{i}$. $X_{it}$ is a binary regressor which denotes the treatment effect $2k$ years after the reform. \cite{wolfers2006did} suggested that ``the divorce rate rose sharply following the adoption of unilateral divorce laws, but this rise was reversed within about a decade". He also concluded that ``15 years after reform the divorce rate is lower as a result of the adoption of unilateral divorce, although it is hard to draw any strong conclusions about long-run effects’’.

	Both \cite{friedberg1998} and \cite{wolfers2006did} used a weighted model by muliplying all variables by the square root of  state population. In addition, they used ordinary OLS standard error, which does not take into account heteroskedasticity, serial and cross-sectional correlations. However, standard errors might be biased when one disregards these correlations. Therefore, we re-estimated the model of \cite{wolfers2006did} using the proposed FGLS method and OLS with the heteroskedastic standard errors of \cite{white1980heteroskedasticity}, the clustered standard error of \cite{arellano1987}, and the robust standard error of \cite{bai2019olsse}.
	
	The same dataset as in \cite{wolfers2006did} is used, which includes the divorce rate, state-level reform years, binary regressors, and state population. Due to missing observations around divorce law reforms, we exclude Indiana, New Mexico and Louisiana. As a result, we obtain balanced panel data from 1956 to 1988 for 48 states. We fit the models both with and without linear time trend, and use OLS and FGLS in each model to estimate $\beta$. In the FGLS estimation, we set bandwidth $L=3$  as proposed by \cite{newey1994} ($L=4(T/100)^{2/9}$). The thresholding values are chosen by the cross-validation method as discussed in Section \ref{tunning}, more specifically, $M=1.8$ and $M=1.9$ for the model with and without linear time trends, respectively.  The Bartlett kernel is used in the OLS robust standard error and FGLS estimation. The estimated $\beta_{1}, \cdots, \beta_{8}$ with and without linear time trend and standard errors are summarized in Table \ref{div} below.
	
	The OLS and FGLS estimates in both models are similar to each other. The results show that divorce rates rose soon after the law reform. However, within a decade, divorce rates had fallen over time. Interestingly, FGLS confirms the negative effects of the law reforms on the divorce rates, specifically, 11-15+ years after the reform in the model with state-specific linear time trends, and 9-15+ years after the reform in the model without state-specific linear time trends. In addition, the FGLS estimates for 1-6 and 1-4 years are positive and statistically significant in the models with and without linear time trends, respectively. For OLS, the coefficient estimates for 3-4 and 7-15+ are significant in the model without linear time trends based on $se_{BCL}$. In contrast, the OLS estimates are  statistically significant only for 1-4 years when a linear time trend is added. According to the clustered standard error, $se_{CX}$, note that only 11-15+ are statistically significant in the model without trends.

	According to OLS and FGLS estimation results with and without a linear time trend, we make the following conlcusion: in the first 8 years, the overall trend of divorce rate is increasing, but the law reform reduces the divorce rate after 3-4 years. However, 8 years after the reform, we observe that the law reform has a negative effect on divorce rate. Note that  \cite{wolfers2006did} de-emphasized the negative coefficient at the end of the periods, as these are not robust to inclusion of state-specific quadratic trends, which we did not employ in this paper. Overall, the results of FGLS estimates are consistent with  \cite{wolfers2006did}.

	\section{Conclusions} \label{conclusion}
	In this paper, we propose a large covariance matrix estimator and a modified version of FGLS that  takes into account both serial and cross-sectional correlations in linear panel models that are robust to heteroskedasticity, serial and cross-sectional correlations. The covariance matrix estimator is asymptotically unbiased with an improved convergence rate. It is shown to be more efficient than other existing methods in panel data literature. From simulated experiments, we confirmed that our FGLS estimates are more efficient than OLS estimates.

	\begin{table}[h!tbp] 
		\centering
		\caption{Performance of estimated $\beta_{0}$; true $\beta_{0}=1$; $i$-dependent serial correlation and weak cross-sectional correlation ($\gamma=0.3$). } \label{weak}
		\begin{tabular}{llrrrrrrrrrrr}
			\toprule\toprule
			&&       OLS & \multicolumn{2}{c}{FGLS}   & & OLS & \multicolumn{2}{c}{FGLS} & &  OLS & \multicolumn{2}{c}{FGLS} \\
			\cline{4-6} 	\cline{8-9}		\cline{12-13}
			N  &   T     &  &Diag&Our &	& &Diag&Our&	& &Diag&Our \\	
			\midrule
			&& \multicolumn{3}{c}{mean($\widehat{\beta}$)} && \multicolumn{3}{c}{std($\widehat{\beta}$)} && \multicolumn{3}{c}{RMSE} \\ 
			\cline{3-5} \cline{7-9} \cline{11-13}
			\noalign{\vskip 2mm} 			
			50	&	50	&	1.001	&	1.002	&	1.001	&&	0.080	&	0.075	&	0.069	&&	1.000	&	0.883	&	0.740	\\
			&	100	&	0.999	&	0.999	&	0.999	&&	0.061	&	0.055	&	0.050	&&	1.000	&	0.823	&	0.680	\\
			&	150	&	1.000	&	1.000	&	1.000	&&	0.045	&	0.041	&	0.038	&&	1.000	&	0.842	&	0.710	\\
		100	&	50	&	1.000	&	1.000	&	1.001	&&	0.058	&	0.053	&	0.050	&&	1.000	&	0.842	&	0.745	\\
			&	100	&	0.999	&	0.998	&	0.998	&&	0.041	&	0.037	&	0.034	&&	1.000	&	0.793	&	0.690	\\
			&	150	&	1.000	&	1.000	&	1.000	&&	0.034	&	0.029	&	0.027	&&	1.000	&	0.749	&	0.628	\\					
			\noalign{\vskip 2mm} 
			&&\multicolumn{3}{c}{mean(s.e.)}&& \multicolumn{3}{c}{std(s.e.)} & & \multicolumn{3}{c}{t-test rejection prob.}\\ 
			\cline{3-5} \cline{7-9} \cline{11-13}
			\noalign{\vskip 2mm} 			
			
			50	&	50	&	0.079	&	0.066	&	0.065	&&	0.004	&	0.002	&	0.002	&&	0.054	&	0.084	&	0.068	\\
				&	100	&	0.058	&	0.048	&	0.047	&&	0.002	&	0.001	&	0.001	&&	0.052	&	0.090	&	0.073	\\
				&	150	&	0.047	&	0.039	&	0.039	&&	0.001	&	0.001	&	0.001	&&	0.047	&	0.068	&	0.043	\\
			100	&	50	&	0.058	&	0.048	&	0.046	&&	0.002	&	0.001	&	0.001	&&	0.058	&	0.083	&	0.069	\\
				&	100	&	0.040	&	0.034	&	0.033	&&	0.001	&	0.000	&	0.000	&&	0.053	&	0.069	&	0.067	\\
				&	150	&	0.032	&	0.027	&	0.026	&&	0.001	&	0.000	&	0.000	&&	0.069	&	0.077	&	0.059	\\
			\bottomrule\bottomrule
		\end{tabular}\\
		\begin{tablenotes}
			\item \textbf{Note:} OLS and FGLS comparison. RMSE is the ratio of the mean squared error of FGLS to that of OLS. The $t$-test rejection prob. is $t$-test rejection rates for 5\% level tests. Robust standard error suggested by \cite{bai2019olsse} is used for OLS. Reported results are based on 1000 replications. The threshold value, $M$, is chosen through the cross-validation method as discussed in Section \ref{tunning}. For the bandwidth, we set $L=3$.
		\end{tablenotes}
		\end{table}

		\begin{table}[h!tbp] 
		\centering
		\caption{Performance of estimated $\beta_{0}$; true $\beta_{0}=1$; $i$-dependent serial correlation and strong cross-sectional correlation ($\gamma=0.7$). } \label{strong}
		\begin{tabular}{llrrrrrrrrrrr}
			\toprule\toprule
			&&       OLS & \multicolumn{2}{c}{FGLS}   & & OLS & \multicolumn{2}{c}{FGLS} & &  OLS & \multicolumn{2}{c}{FGLS} \\
			\cline{4-6} 	\cline{8-9}		\cline{12-13}
			N  &   T     &  &Diag&Our &	& &Diag&Our&	& &Diag&Our \\	
			\midrule
			&& \multicolumn{3}{c}{mean($\widehat{\beta}$)} && \multicolumn{3}{c}{std($\widehat{\beta}$)} && \multicolumn{3}{c}{RMSE} \\ 
			\cline{3-5} \cline{7-9} \cline{11-13}
			\noalign{\vskip 2mm} 			
			50	&	50	&	1.000	&	1.001	&	1.000	&&	0.084	&	0.079	&	0.072	&&	1.000	&	0.883	&	0.744	\\
				&	100	&	0.999	&	1.000	&	1.000	&&	0.063	&	0.057	&	0.052	&&	1.000	&	0.817	&	0.677	\\
				&	150	&	1.002	&	1.003	&	1.003	&&	0.051	&	0.047	&	0.042	&&	1.000	&	0.856	&	0.685	\\
			100	&	50	&	1.001	&	1.000	&	1.001	&&	0.070	&	0.063	&	0.059	&&	1.000	&	0.810	&	0.711	\\
				&	100	&	0.998	&	0.999	&	0.999	&&	0.048	&	0.044	&	0.042	&&	1.000	&	0.840	&	0.742	\\
				&	150	&	1.000	&	1.000	&	1.000	&&	0.038	&	0.033	&	0.030	&&	1.000	&	0.777	&	0.617	\\
			
			\noalign{\vskip 2mm} 
			&&\multicolumn{3}{c}{mean(s.e.)}&& \multicolumn{3}{c}{std(s.e.)} & & \multicolumn{3}{c}{t-test rejection prob.}\\ 
			\cline{3-5} \cline{7-9} \cline{11-13}
			\noalign{\vskip 2mm} 			
			
			50	&	50	&	0.079	&	0.066	&	0.065	&&	0.004	&	0.002	&	0.002	&&	0.065	&	0.100	&	0.082	\\
			&	100	&	0.059	&	0.048	&	0.048	&&	0.002	&	0.001	&	0.001	&&	0.068	&	0.104	&	0.065	\\
			&	150	&	0.047	&	0.039	&	0.039	&&	0.001	&	0.001	&	0.001	&&	0.065	&	0.110	&	0.064	\\
			100	&	50	&	0.059	&	0.048	&	0.048	&&	0.002	&	0.001	&	0.001	&&	0.105	&	0.143	&	0.125	\\
			&	100	&	0.040	&	0.034	&	0.034	&&	0.001	&	0.000	&	0.000	&&	0.097	&	0.139	&	0.094	\\
			&	150	&	0.032	&	0.027	&	0.028	&&	0.001	&	0.000	&	0.000	&&	0.101	&	0.123	&	0.079	\\	
			\bottomrule\bottomrule
		\end{tabular}\\
		\begin{tablenotes}
			\item \textbf{Note:} See notes to Table \ref{weak}.
		\end{tablenotes}
	\end{table}

	\begin{table}[h!tbp]
		\centering	
		\caption{Empirical application: effects of divorce law refrom with state and year fixed effects: US state level data annual from 1956 to 1988, dependent variable is divorce rate per 1000 persons per year. OLS and FGLS estimates and standard errors (using state population weights).}  \label{div}
		\begin{tabular}{p{2.3cm}p{1.2cm}p{1.2cm}p{1.2cm}p{1.2cm}p{1.2cm}p{1.2cm}}
			\toprule\toprule
			\noalign{\vskip 2mm} 
			Effects: & $\widehat{\beta}_{OLS}$  & $se_{W}$ & $se_{CX}$ & $se_{BCL}$ & $\widehat{\beta}_{FGLS}$ &$ se_{FGLS}$ \\
			\midrule
			\noalign{\vskip 2mm} 
			\multicolumn{7}{c}{Panel A: Without state-specific linear time trends} \\
			\noalign{\vskip 2mm} 
			1\textendash 2 years	&	0.256	&	0.140	&	0.189	&	0.148	&	0.133	&	0.046$^{*}$		\\
			3\textendash 4 years	&	0.209	&	0.081$^{*}$	&	0.159	&	0.089$^{*}$	&	0.165	&	0.056$^{*}$		\\
			5\textendash 6 years	&	0.126	&	0.073	&	0.168	&	0.069	&	0.100	&	0.059	\\
			7\textendash 8 years	&	0.105	&	0.070	&	0.165	&	0.040$^{*}$	&	0.026	&	0.061		\\
			9\textendash 10 years	&	-0.122	&	0.060$^{*}$	&	0.161	&	0.054$^{*}$	&	-0.129	&	0.061$^{*}$	\\
			11\textendash 12 years	&	-0.344	&	0.071$^{*}$	&	0.173$^{*}$	&	0.075$^{*}$	&	-0.253	&	0.062$^{*}$	\\
			13\textendash 14 years	&	-0.496	&	0.074$^{*}$	&	0.188$^{*}$	&	0.062$^{*}$	&	-0.324	&	0.063$^{*}$\\
			15+ years	&	-0.508	&	0.089$^{*}$	&	0.223$^{*}$	&	0.077$^{*}$	&	-0.325	&	0.067$^{*}$	\\
			\noalign{\vskip 2mm} 
			\multicolumn{7}{c}{Panel B: With state-specific linear time trends} \\
			\noalign{\vskip 2mm} 
			1\textendash 2 years	&	0.286	&	0.152	&	0.206	&	0.140$^{*}$	&	0.171	&	0.044$^{*}$	\\
			3\textendash 4 years	&	0.254	&	0.099$^{*}$	&	0.171	&	0.126$^{*}$	&	0.220	&	0.058$^{*}$	\\
			5\textendash 6 years	&	0.186	&	0.102	&	0.206	&	0.143	&	0.175	&	0.067$^{*}$	\\
			7\textendash 8 years	&	0.177	&	0.109	&	0.230	&	0.146	&	0.097	&	0.075		\\
			9\textendash 10 years	&	-0.037	&	0.111	&	0.241	&	0.154	&	-0.073	&	0.082	\\
			11\textendash 12 years	&	-0.247	&	0.128	&	0.268	&	0.183	&	-0.240	&	0.089$^{*}$		\\
			13\textendash 14 years	&	-0.386	&	0.137$^{*}$	&	0.295	&	0.209	&	-0.329	&	0.098$^{*}$\\
			15+ years	&	-0.414	&	0.158$^{*}$	&	0.337	&	0.243	&	-0.382	&	0.108$^{*}$	\\
			\bottomrule\bottomrule
		\end{tabular}\\
		\begin{tablenotes}
			\item \textbf{Note:} Standard errors with asterisks indicate significance at 5\% level using $N(0,1)$ critical values. For OLS standard errors, $se_{W}$ and $se_{CX}$ refer to the heteroskedastic standard errors by \cite{white1980heteroskedasticity} and the clustered standard errors by \cite{arellano1987}, respectively; $se_{BCL}$ is the robust standard error suggested by \cite{bai2019olsse}. The threshold values for FGLS by the cross-validation are $M=1.9$ and $M=1.8$ for Panel A and B, respectively.
		\end{tablenotes}
	\end{table}

	\clearpage
	\newpage
	
	\appendix
	\section{Appendix} \label{appendix}
	
	Throughout the proof, $\max_{i}$, $\max_{t}$, $\max_{h}$, $\max_{ij}$, and $\max_{it}$ denote $\max_{i\leq N}$, $\max_{t\leq T}$, $\max_{h\leq L}$, $\max_{i\leq N,j\leq N}$, and $\max_{i\leq N, t\leq T}$ respectively. In addition, for technical simplicity, we assume that there is no fixed effects so that we do not take the de-meaning procedure. Extending to the more complete estimators with de-meaning is straightforward, but should require more technical arguments to show that the effect from added dependences due to the de-meaning is negligible. 
	
	\subsection{Proofs of Theorem \ref{thm.1}-\ref{thm.2}.}
	%%%%%%    Lemma 1   %%%%%%
	\begin{lem} \label{lem1}
		Under the Assumptions \ref{assum1}-\ref{assum2}, for $\gamma_{T}=\sqrt{\frac{\log(LN)}{T}}$, for $h\geq 0$,
		\begin{equation*}
		\max_{h\leq L}\max_{i,j \leq N} \left\|\frac{1}{T}\sum_{t=h+1}^{T}x_{it}u_{j,t-h}\right\|= O_{P}(\gamma_{T}).
		\end{equation*}
	\end{lem}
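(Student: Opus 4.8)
The plan is to establish a uniform (over $h\le L$ and over pairs $(i,j)$) Bernstein-type bound for the sample cross-moments $\frac1T\sum_{t=h+1}^T x_{it}u_{j,t-h}$, and then combine this with a union bound. First I would fix $h\ge 0$, $i,j\le N$, and a coordinate $l\le d$, and study the scalar average $S_{h,ij,l}=\frac1T\sum_{t=h+1}^T x_{it,l}u_{j,t-h}$. Since $\{(x_t,u_t)\}$ is strictly stationary and strong mixing (Assumption \ref{assum1}(i),(iv)), and $\{u_t\}$ is independent of $\{x_t\}$ (so in particular $E(x_{it,l}u_{j,t-h})=E x_{it,l}\,E u_{j,t-h}=0$ by the zero-mean assumption on $u_t$), the summands $x_{it,l}u_{j,t-h}$ form a mean-zero, strictly stationary, strong-mixing triangular array with the same mixing coefficients. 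The exponential tails in Assumption \ref{assum1}(iii) give, via $|x_{it,l}u_{j,t-h}|\le \frac12(x_{it,l}^2+u_{j,t-h}^2)$ (or H\"older), a sub-exponential tail for the products with some exponent $r$ satisfying $r^{-1}=r_1^{-1}+r_2^{-1}$; together with $\kappa^{-1}$ this keeps $r^{-1}+\kappa^{-1}<1$ up to the relevant range, so the Bernstein inequality for weakly dependent sequences (e.g.\ the form used by \cite{bickel2008a}, \cite{fan2013large}, or Merlev\`ede–Peligrad–Rio) applies.

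The key step is then: for each such scalar average there are constants $C_1,C_2>0$ (uniform in $h,i,j,l$ because all the tail and mixing constants are uniform) such that for any $v>0$,
\begin{equation*}
P\!\left(\left|\tfrac1T\textstyle\sum_{t=h+1}^T x_{it,l}u_{j,t-h}\right|>v\right)\le \exp\!\left(-C_1 T v^2\right)+\exp\!\left(-C_2 (Tv)^{\kappa'}\right)
\end{equation*}
for $v$ in a suitable range, where $\kappa'$ depends on $r$ and $\kappa$. One subtlety: the effective sample size is $T-h\ge T-L$, but since $L/T\to0$ this only changes constants. Next I would take a union bound over $h\le L$, $i,j\le N$, and $l\le d$ (a total of $O(dL N^2)$ events, with $d$ fixed), set $v=M\gamma_T=M\sqrt{\log(LN)/T}$ with $M$ large, and observe that the Gaussian-type term $\exp(-C_1 T v^2)=\exp(-C_1 M^2\log(LN))$ dominates and beats the $N^2 L$ cardinality once $M$ is chosen large enough; the mixing-tail term is of smaller order because $(Tv)^{\kappa'}=(M\sqrt{T\log(LN)})^{\kappa'}$ grows faster than $\log(LN)$. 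This yields $\max_{h\le L}\max_{i,j\le N}\max_{l\le d}|S_{h,ij,l}|=O_P(\gamma_T)$, and passing from the coordinatewise bound to the Euclidean-norm bound costs only the fixed factor $\sqrt d$, giving the claim.

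The main obstacle I anticipate is bookkeeping the constants in the weak-dependence Bernstein inequality so that they are genuinely uniform over $h,i,j$ and so that the trade-off between the sub-exponential product tail (exponent $r$) and the mixing rate (exponent $\kappa$) still leaves room for the union bound over $LN^2$ terms; in particular one must check that the condition $r^{-1}+\kappa^{-1}<1$ (or whatever the precise form of the inequality requires, cf.\ Assumption \ref{assum1}(iv) which gives $r_1^{-1}+r_2^{-1}+\kappa^{-1}>1$, i.e.\ the complementary inequality is what one needs to be careful about) is compatible, and that the truncation argument used to derive the Bernstein bound for unbounded sub-exponential summands does not inflate the variance proxy beyond an absolute constant. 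Everything else — stationarity, the zero-mean property, reduction to scalars, and the final $\sqrt d$ factor — is routine. Note also the statement is written for $h\ge0$; the $h<0$ case is identical after relabeling, so I would only remark on it rather than repeat the argument.
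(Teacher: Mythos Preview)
Your proposal is correct and follows essentially the same route as the paper: the paper also reduces to scalar summands (assuming $d=1$), invokes the exponential-tail property of the products $x_{it}u_{j,t-h}$ via Lemma~A.2 of \cite{fan2011high}, applies the Bernstein inequality for weakly dependent data from \cite{merlevede2011bernstein}, and finishes with a union (Bonferroni) bound over the $LN^2$ indices. Your extra care about the coordinatewise reduction, the effective sample size $T-h$, and the compatibility of the tail/mixing exponents is more explicit than the paper's own proof, but the underlying argument is the same.
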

	\begin{proof}
	Let $\gamma_{ij,t,h}=x_{it}u_{j,t-h}I_{\{h+1 \leq t\leq T\}}$, where $I_A$ is the indicator function of the set $A$. To simplify notation, we assume $d=\dim(x_{it})=1$. 
	\begin{comment}
	
	Note that, for a positive constant $C< \infty$, 
	\begin{align*}
	\max\limits_{h}\max\limits_{ij}\frac{1}{T}\sum_{t=1}^{T}\var(\gamma_{ij,t,h}) & =\max\limits_{h}\max\limits_{ij}\frac{1}{T}\sum_{t = h+1}^{T}\var(x_{it}u_{j,t-h})\\~
	& \leq \max\limits_{h}\frac{1}{T}\sum_{t=h+1}^{T}C  = O(1).
	\end{align*}
	\end{comment}
	By Lemma A.2 of \cite{fan2011high} and Assumption \ref{assum1} (iii), $\gamma_{ij,t,h}$ satisfies
	the exponential tail condition.  We set $\alpha_{T} = \sqrt{\frac{\log(LN)}{T}}$ and $c_{1}^2=3c_{2}$ for $c_{1}, c_{2}>0$. Using Bernstein inequality for weakly dependent data in \cite{merlevede2011bernstein} and the Bonferroni method, we have 
	\begin{align*}
	P\left(\max\limits_{h}\max\limits_{ij}\left|\frac{1}{T}\sum_{t=1}^{T}\gamma_{ij,t,h}\right| > c_{1}\alpha_{T}\right) &\leq LN^2\max\limits_{h \leq L}\max\limits_{ij}P\left(\left|\cfrac{1}{T}\sum\limits_{t=1}^{T}\gamma_{ij,t,h}\right| > c_{1}\alpha_{T}\right) \rightarrow 0. \;\;
	\end{align*}
	Then $\max_{h}\max_{ij} \|\frac{1}{T}\sum_{t=h+1}^{T}x_{it}u_{j,t-h}\| =O_{P}(\sqrt{\frac{\log(LN^2)\max_{ij,h}\frac{1}{T}\sum_{t=1}^T\var(\gamma_{ij,t,h})}{T}})=O_{P}(\sqrt{\frac{\log(LN)}{T}}).$ 
	\end{proof}

	%%%%%%    Lemma 2    %%%%%%
	\begin{lem} \label{lem2}
		Under the Assumptions \ref{assum1}-\ref{assum2}, for $\gamma_{T}=\sqrt{\frac{\log(LN)}{T}}$,\\
		(i) $\max_{h\leq L}\max_{i,j \leq N} |\widetilde{R}_{h,ij}-\Omega_{h,ij}| = O_{P}(\gamma_{T} ),$
		where $\widetilde{R}_{h,ij} = \frac{1}{T}\sum_{t=h+1}^{T}\widehat{u}_{it}\widehat{u}_{j,t-h}$ for $h \geq 0$.\\
		(ii) $ \max_{|h|\leq L}\|\widetilde{\Omega}_{h}-\Omega_{h}\|_{1}=O_{P}(m_{N}\gamma_{T}^{1-q})$.
		
	\end{lem}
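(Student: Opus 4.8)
The plan is to reduce everything to the ``known-$u$'' case via an expansion of the OLS residuals, and then invoke the maximal bound of Lemma \ref{lem1} together with standard thresholding arguments. Write $\widehat u_{it} = u_{it} - x_{it}'(\widehat\beta_{OLS}-\beta)$, so that
\begin{align*}
\widetilde R_{h,ij} - \frac1T\sum_{t=h+1}^T u_{it}u_{j,t-h}
 = -\,(\widehat\beta_{OLS}-\beta)'\frac1T\sum_{t} x_{it} u_{j,t-h}
 - (\widehat\beta_{OLS}-\beta)'\Big(\frac1T\sum_t x_{j,t-h}u_{it}\Big)
 + (\widehat\beta_{OLS}-\beta)'\Big(\frac1T\sum_t x_{it}x_{j,t-h}'\Big)(\widehat\beta_{OLS}-\beta).
\end{align*}
First I would record that $\|\widehat\beta_{OLS}-\beta\| = O_P((NT)^{-1/2})$ under Assumptions \ref{assum1}--\ref{assum2} (standard, using $\|\Omega\|_1=O(1)$ implied by Assumption \ref{assum1}(ii) via the block structure). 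Next, Lemma \ref{lem1} gives $\max_h\max_{ij}\|\frac1T\sum_t x_{it}u_{j,t-h}\| = O_P(\gamma_T)$, and the same bound with the roles of the indices permuted controls the second term; for the third term one needs $\max_h\max_{ij}\|\frac1T\sum_t x_{it}x_{j,t-h}'\| = O_P(1)$, which follows from another application of the Bernstein/Bonferroni argument of Lemma \ref{lem1} applied to $x_{it}x_{j,t-h}' - Ex_{it}x_{j,t-h}'$ (its exponential tail comes from Assumption \ref{assum1}(iii) and Lemma A.2 of \cite{fan2011high}) plus $\|\Omega_X\|_1 = O(1)$-type summability. Combining, the residual-induced error is $O_P((NT)^{-1/2}\gamma_T) + O_P((NT)^{-1})$, which is $o_P(\gamma_T)$. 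Finally, the ``oracle'' term $\max_h\max_{ij}|\frac1T\sum_t u_{it}u_{j,t-h} - \Omega_{h,ij}| = O_P(\gamma_T)$ is itself a direct application of the Lemma \ref{lem1} machinery to $u_{it}u_{j,t-h} - Eu_{it}u_{j,t-h}$ (exponential tails via Lemma A.2 of \cite{fan2011high}, weak dependence via Assumption \ref{assum1}(iv), and $\max_{ij,h}\frac1T\sum_t\var(u_{it}u_{j,t-h}) = O(1)$ from $Eu_{it}^4<C$). Adding the two pieces proves part (i).

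For part (ii), the argument is the now-standard thresholding bound of \cite{bickel2008a}, adapted uniformly over $|h|\le L$. On the event $\mathcal E = \{\max_{h\le L}\max_{ij}|\widetilde R_{h,ij}-\Omega_{h,ij}| \le M_0\gamma_T\}$ — which has probability $\to 1$ by (i), provided $M$ in the threshold $\tau_{ij}=M\gamma_T\sqrt{|\widetilde R_{0,ii}||\widetilde R_{0,jj}|}$ is chosen large enough relative to $M_0$ and the $\Omega_{0,ii}$ are bounded above and below — I would bound $\|\widetilde\Omega_h - \Omega_h\|_1 = \max_i\sum_j |\widetilde\sigma_{h,ij}-\Omega_{h,ij}|$ by splitting the inner sum according to whether $|\Omega_{h,ij}|$ is large or small relative to $\gamma_T$. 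For the diagonal term $|\widetilde\sigma_{h,ii}-\Omega_{h,ii}| = |\widetilde R_{h,ii}-\Omega_{h,ii}| \le M_0\gamma_T$. For $i\neq j$: when $|\Omega_{h,ij}|$ is large, soft-thresholding moves the estimate by at most $\tau_{ij}+M_0\gamma_T = O(\gamma_T)$; when $|\Omega_{h,ij}|$ is small, $|\widetilde\sigma_{h,ij}-\Omega_{h,ij}| \le \tau_{ij}+M_0\gamma_T + |\Omega_{h,ij}| = O(\gamma_T) + |\Omega_{h,ij}|$, and in either regime the contribution is $O(\gamma_T) \wedge$ (a term controlled by $|\Omega_{h,ij}|$). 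Summing and using $\sum_j |\Omega_{h,ij}| \le |\Omega_{h,ij}|^{1-q}\cdot$(appropriate power) together with the sparsity measure $m_N = \max_{h\le L}\max_i \sum_j |\Omega_{h,ij}|^q$ yields $\max_i\sum_j |\widetilde\sigma_{h,ij}-\Omega_{h,ij}| = O_P(m_N\gamma_T^{1-q})$, uniformly in $|h|\le L$ since the controlling event $\mathcal E$ is uniform in $h$.

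The main obstacle is obtaining the bounds \emph{uniformly over the $2L+1$ lags} while keeping the logarithmic factor only $\log(LN)$ rather than something worse: this is exactly why the Bonferroni step in Lemma \ref{lem1} is taken over $LN^2$ events and why the Bernstein inequality for weakly dependent sequences of \cite{merlevede2011bernstein} (rather than an i.i.d.\ one) must be used, to accommodate the strong-mixing decay in Assumption \ref{assum1}(iv). A secondary technical point is verifying that the cross terms $\frac1T\sum_t x_{j,t-h}u_{it}$ and $\frac1T\sum_t x_{it}x_{j,t-h}'$ inherit exponential tails — this is handled by Lemma A.2 of \cite{fan2011high} on products of sub-Weibull variables — and that $|\widetilde R_{0,ii}|$ stays bounded away from $0$ and $\infty$ uniformly in $i$, so that the data-driven threshold $\tau_{ij}$ is comparable to the deterministic rate $\gamma_T$; both follow from part (i) applied at $h=0$ combined with Assumption \ref{assum1}(ii).
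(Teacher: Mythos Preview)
Your proposal is correct and follows essentially the same route as the paper: expand $\widehat u_{it}\widehat u_{j,t-h}$ via $\widehat u_{it}=u_{it}-x_{it}'(\widehat\beta_{OLS}-\beta)$, control the cross terms by Lemma~\ref{lem1} and $\|\widehat\beta_{OLS}-\beta\|=O_P((NT)^{-1/2})$, and handle the oracle term by the same Bernstein/Bonferroni argument; for part (ii) the paper simply cites Theorem~5 of \cite{fan2013large}, which is exactly the thresholding calculation you sketch. One small point the paper makes explicit that you do not: since the sum $\frac1T\sum_{t=h+1}^T u_{it}u_{j,t-h}$ has only $T-h$ terms divided by $T$, there is an additional bias term $\frac{h}{T}\Omega_{h,ij}$, which the paper isolates as $a_3=O_P(L/T)$ --- this is $o(\gamma_T)$ and so does not affect your conclusion, but you should display it when writing up.
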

	\begin{proof}
		(i) First, we  write
		\begin{align*}
		\max_{h\leq L}\max_{i,j \leq N} \left|\frac{1}{T}\sum_{t=h+1}^{T}\widehat{u}_{it}\widehat{u}_{j,t-h}-Eu_{it}u_{j,t-h}\right|  &\leq  \max\limits_{h}\max\limits_{ij} \left|\frac{1}{T}\sum_{t=h+1}^{T}(\widehat{u}_{it}\widehat{u}_{j,t-h}-u_{it}u_{j,t-h})\right|\\
		& \;\;+\max\limits_{h}\max\limits_{ij} \left|\frac{1}{T}\sum_{t=h+1}^{T}u_{it}u_{j,t-h}-Eu_{it}u_{j,t-h}\right|\\
		& \;\;+\max\limits_{h}\max\limits_{ij}\frac{L}{T}|Eu_{it}u_{j,t-h}|\\
		& \equiv a_1+a_2+a_3.
		\end{align*}
		Then,  $a_1$ is bounded by $a_{11} + a_{12} + a_{13}$, where
		\begin{align*}
		a_{11} & \equiv \max\limits_{h}\max\limits_{ij} \left|\frac{1}{T}\sum_{t=h+1}^{T}(\widehat{u}_{it}-u_{it})(\widehat{u}_{j,t-h}-u_{j,t-h})\right|\\
		a_{12} & \equiv \max\limits_{h}\max\limits_{ij} \left|\frac{1}{T}\sum_{t=h+1}^{T}(\widehat{u}_{it}-u_{it})u_{j,t-h}\right|\\
		a_{13} & \equiv \max\limits_{h}\max\limits_{ij} \left|\frac{1}{T}\sum_{t=h+1}^{T}u_{it}(\widehat{u}_{j,t-h}-u_{j,t-h})\right|.
		\end{align*}
		First, by the Cauchy-Schwarz inequality,
		\begin{align*}
		a_{11} & = \max_{h}\max_{ij}\left|\frac{1}{T}\sum_{t=h+1}^{T}x_{it}'(\widehat{\beta}-\beta)x_{j,t-h}'(\widehat{\beta}-\beta)\right|\\
		&\leq\|\widehat{\beta}-\beta\|^{2}\max_{h}\max_{ij}\frac{1}{T}\sum_{t=h+1}^{T}\|x_{it}\|\|x_{j,t-h}\|\\
		&\leq  O_{P}\left(\cfrac{1}{NT}\right) \max\limits_{i}\cfrac{1}{T}\sum\limits_{t=1}^{T}\|x_{it}\|^2 =  O_{P}\left(\cfrac{1}{NT}\right).
		\end{align*}
		Note that $\max_{i}\frac{1}{T}\sum_{t=1}^{T}\|x_{it}\|^2 $ is bounded by the exponential tail condition and Bernstein's inequality using the same argument of Lemma 3.1 of \cite{fan2011high}.
		Next, by Lemma \ref{lem1},
		\begin{align*}
		a_{12} & \leq  \|\beta-\widehat{\beta}\|\max\limits_{h}\max\limits_{ij}  \left\|\cfrac{1}{T}\sum\limits_{t=h+1}^{T}x_{it}u_{j,t-h}\right\|\\
		& \leq O_{P}(\frac{1}{\sqrt{NT}})\max\limits_{h}\max\limits_{ij}  \left\|\cfrac{1}{T}\sum\limits_{t=h+1}^{T}x_{it}u_{j,t-h}\right\| = O_{P}\left(\frac{1}{T}\sqrt{\frac{\log(LN)}{N}}\right).
		\end{align*}
		Similarly, $a_{13}$ is bounded using the same argument. Then, we have $a_{1} = O_{P}(\frac{1}{T}\sqrt{\frac{\log(LN)}{N}})$.\\
		
		Next, we let $Z_{h,ij,t} = u_{it}u_{j,t-h}-Eu_{it}u_{j,t-h}$, which satisfies the exponential tail condition by Assumption \ref{assum1} and Lemma A.2 of \cite{fan2011high}. Then 
		$a_{2}$ can be written as $\max_{h}\max_{ij}|\frac{1}{T}\sum_{t}Z_{h,ij,t}|$.  Set $\alpha_{T} = \sqrt{\frac{\log(LN)}{T}}$ and $c_{1}^2=3c_{2}$ for $c_{1}, c_{2}>0$. Then, using Bernstein's inequality in \cite{merlevede2011bernstein} and the same argument as in the proof of Lemma \ref{lem1}, 
		\begin{align*}
		P\left(\max\limits_{h \leq L}\max\limits_{ij}\left|\frac{1}{T}\sum_{t=1}^{T}Z_{h,ij,t}\right| > c_{1}\alpha_{T}\right) &\leq LN^2\max\limits_{h \leq L}\max\limits_{ij}P\left(\left|\cfrac{1}{T}\sum\limits_{t=1}^{T}Z_{h,ij,t}\right| > c_{1}\alpha_{T}\right) \rightarrow 0. \;\;
		%& \leq LN^2 \exp(-\cfrac{Tc_{1}^2\alpha_{T}^2}{c_{2}})\\	
		%&\leq \cfrac{1}{LN} \rightarrow 0. \;\;
		\end{align*}
		Hence, we have $a_2  = O_{P}(\sqrt{\frac{\log(LN)}{T}})$.	In addition,  $a_{3} = O_{P}(\frac{L}{T})$, which can be proved easily. Together,  
		\begin{align*}
		\max_{h}\max_{ij} \left|\frac{1}{T}\sum_{t=h+1}^{T}\widehat{u}_{it}\widehat{u}_{j,t-h}-Eu_{it}u_{j,t-h}\right| = O_{P}\left(\sqrt{\frac{\log(LN)}{T}}\right).
		\end{align*}

	(ii) Following Theorem 5 of \cite{fan2013large}, we then have $ \max_{|h|\leq L}\|\widetilde{\Omega}_{h}-\Omega_{h}\|_{1}=O_{P}(m_{N} \gamma_{T}^{1-q})$, where $\widetilde{\Omega}_{h}$ is defined in (\ref{e.3.8}).
	\end{proof}

	%%%%%%     Lemma 3     %%%%%%
	\begin{lem} \label{lem3}
		For $h\leq L$ and $v \leq L$, let $Q_{imp}^{hv} = \sum_{t=1}^{T}w_{it}\sum_{j=1}^{N}(\Omega^{-1})_{t+h,m+v,jp}$.
		Then, under the Assumption \ref{assum1}-\ref{assum2},
		$$
		\max_{h,v\leq L}\max_{i,p\leq N}\left\|\frac{1}{NT}\sum_{q=1}^{N}\sum_{m=1}^{T}\varepsilon_{qm}Q_{imp}^{hv}\right\| = O_{P}\left(\sqrt{\frac{\log(LN)}{NT}}\right).$$
	\end{lem}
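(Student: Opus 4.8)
The plan is to bound the quantity by recognizing it as a triple average (over $q$, $m$, and the internal time index $t$ hidden inside $Q_{imp}^{hv}$), and then reduce it to an application of the Bernstein-type inequality for weakly dependent data (as in \cite{merlevede2011bernstein}), exactly as in Lemmas \ref{lem1} and \ref{lem2}. First I would expand $\frac{1}{NT}\sum_{q,m}\varepsilon_{qm}Q_{imp}^{hv} = \frac{1}{NT}\sum_{q=1}^N\sum_{m=1}^T\sum_{t=1}^T w_{it}\varepsilon_{qm}(\Omega^{-1})_{t+h,m+v,jp}$ summed appropriately over $j$. The key structural observation is that $W=\Omega^{-1}X$ and $\varepsilon=\Omega^{-1}U$, together with Assumption \ref{assum1}(i) ($\{x_t\}$ and $\{u_t\}$ independent) and Assumption \ref{assum3}(ii) ($\max_{i,t}\sum_{s,j}|(\Omega^{-1})_{ts,ij}|=O(1)$), mean that $w_{it}$ and $\varepsilon_{qm}$ inherit exponential-tail behavior from $x$ and $u$ (via Lemma A.2 of \cite{fan2011high}, since each is a uniformly-$\ell_1$-bounded linear combination of exponential-tailed variables), and that the weighting coefficients $(\Omega^{-1})_{\cdot,\cdot}$ are absolutely summable uniformly. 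This lets me treat the inner sum as a well-conditioned weighted average with variance of order $O(1/(NT))$ after normalization.

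The main steps, in order: (1) establish that $w_{it}$ and $\varepsilon_{qm}$ satisfy the exponential tail condition uniformly in indices, using the $\ell_1$-bound on rows of $\Omega^{-1}$ from Assumption \ref{assum3}(ii) and the tail assumptions in Assumption \ref{assum1}(iii), invoking Lemma A.2 of \cite{fan2011high}; (2) write the target as $\frac{1}{NT}\sum_{(q,m)}\sum_{t} \zeta^{hv}_{i,p,q,m,t}$ where $\zeta$ collects $w_{it}\varepsilon_{qm}$ times the covariance weight, and check that the process is mean-zero (using independence of $x$ and $u$, so $E[w_{it}\varepsilon_{qm}]=0$) and strong mixing with the same mixing coefficients as $(x_t,u_t)$ since it is a measurable function of finitely/countably many coordinates with summable weights; (3) compute the normalized variance: because $\sum_{t,j}|(\Omega^{-1})_{t+h,m+v,jp}|=O(1)$ uniformly, the effective number of nonnegligible terms in the weighted sum is $O(NT)$ with bounded per-term variance, so $\frac{1}{(NT)^2}\mathrm{Var}(\cdot)=O(1/(NT))$; (4) apply the Bernstein inequality of \cite{merlevede2011bernstein} together with the Bonferroni bound over the $O(L^2 N^2)$ choices of $(h,v,i,p)$, with deviation level $c\sqrt{\log(LN)/(NT)}$, to conclude the probability of exceedance tends to zero; (5) collect to get the stated $O_P(\sqrt{\log(LN)/(NT)})$ rate.

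The hard part will be step (3): carefully bookkeeping how the absolute-summability of the rows of $\Omega^{-1}$ (Assumption \ref{assum3}(ii)) converts a seemingly triple sum $\sum_{q,m,t}$, which naively would have $O((NT)^2 T)$ terms, into something whose variance after dividing by $(NT)^2$ is genuinely $O(1/(NT))$ rather than larger. The point is that for fixed $i,p$, only $O(1)$ worth of mass in the covariance weights $(\Omega^{-1})_{t+h,m+v,jp}$ survives when summed over $(t,j)$ for each $(q,m)$, so the sum over $(q,m,t,j)$ behaves like a sum over $O(NT)$ weakly dependent mean-zero bounded-variance terms; one must also verify the mixing decay is fast enough (Assumption \ref{assum1}(iv)) that cross-covariances between distinct time indices contribute only a constant factor. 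I would also need a brief argument, as in Lemma \ref{lem1}, that replacing the exponential tail on $x_{it}, u_{it}$ by the exponential tail on the products/linear combinations is legitimate via Lemma A.2 of \cite{fan2011high}. The rest is a routine Bernstein-plus-union-bound computation parallel to Lemmas \ref{lem1}--\ref{lem2}.
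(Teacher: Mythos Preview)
Your approach is essentially the same as the paper's: define a summand indexed by $(q,m)$, verify bounded variance using the $\ell_1$-summability of $\Omega^{-1}$ and the independence of $X$ and $U$, and apply the Bernstein inequality of \cite{merlevede2011bernstein} together with a union bound over the $O(L^2N^2)$ choices of $(h,v,i,p)$. The only tactical difference is that the paper keeps $Q_{imp}^{hv}$ packaged as a single (random, $X$-measurable) coefficient and works with $\zeta_{ipq,mhv}=\varepsilon_{qm}Q_{imp}^{hv}$ directly, asserting $E(Q_{imp}^{hv})^2<\infty$ from $\|\Omega^{-1}\|_1=O(1)$, whereas you propose to expand the internal $t$-sum and do the variance bookkeeping explicitly; your more detailed treatment of step (3) is exactly the content the paper compresses into one sentence.
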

	\begin{proof}
		First, we define $W=\Omega^{-1}X=(w_{1}',\cdots, w_{T}')'$ ($NT\times d$), and $\varepsilon = \Omega^{-1}U=(\varepsilon_{1}',\cdots, \varepsilon_{T}')'$ ($NT\times 1$).  Let $w_{it}'$ and $\varepsilon_{it}$ denote the $i$th row of $w_{t}$ and the $i$th element of $\varepsilon_{t}$, respectively. For simplicity, we assume $d=1$. Let $\zeta_{ipq,mhv} = \varepsilon_{qm}Q_{imp}^{hv}$. Note that due to $\|\Omega^{-1}\|_{1} < \infty$, we know $E(Q_{imp}^{hv})^{2} < \infty$. Then $\max_{hv}\max_{ip}\frac{1}{NT}\sum_{q=1}^{T}\sum_{m=1}^{T}\var(\zeta_{ipq,mhv})$ is bounded. Set $\alpha_{NT} = \sqrt{\frac{\log(LN)}{NT}}$ and $c_{1}^{2}=3c_{2}$ for $c_{1}, c_{2}> 0$. Then, using Bernstein's inequality and the same argument as in the proof of Lemma \ref{lem1},  
		\begin{align*}
		P\left(\max\limits_{h,v \leq L}\max\limits_{ip}\left|\frac{1}{NT}\sum_{q=1}^{N}\sum_{m=1}^{T}\zeta_{ipq,mhv}\right| > c_{1}\alpha_{NT}\right)
		&\leq L^2N^2\max\limits_{h,v\leq L}\max\limits_{ij}P\left(\left|\frac{1}{NT}\sum_{q=1}^{N}\sum_{m=1}^{T}\zeta_{ipq,mhv}\right| > c_{1}\alpha_{NT}\right)\\
		&\rightarrow 0. \;\;
		%& \leq L^2N^2 \exp(-\cfrac{NTc_{1}^2\alpha_{NT}^2}{c_{2}})\\	
		%&\leq \cfrac{1}{LN} \\
		\end{align*}
		Therefore, we have $\max_{h,v\leq L}\max_{i,p\leq N}\|\frac{1}{NT}\sum_{q=1}^{N}\sum_{m=1}^{T}\varepsilon_{qm}Q_{imp}^{hv}\|  =O_{P}(\sqrt{\frac{\log(LN)}{NT}}).$ 
	\end{proof}

%%%%%%    Lemma 4    %%%%%%
\begin{lem} \label{lem4}
	Consider a symmetric block matrix $A = (A_{ij})\in \mathbb{R}^{dn\times dn}$ where $A_{ij}\in \mathbb{R}^{d\times d}$. Then $$\|A\| \leq  \max_{i}\sum_{j=1}^{n}\|A_{ij}\|.$$
\end{lem}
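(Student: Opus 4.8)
The plan is to reduce the operator-norm bound on the block matrix $A$ to an elementary estimate of $\|Ax\|$ for an arbitrary test vector $x \in \mathbb{R}^{dn}$, partitioned conformally into blocks $x = (x_1', \dots, x_n')'$ with each $x_i \in \mathbb{R}^d$. Since $A$ is symmetric, $\|A\| = \sup_{\|x\|=1} \|Ax\|$, so it suffices to show $\|Ax\| \le \big(\max_i \sum_{j=1}^n \|A_{ij}\|\big)\|x\|$ for every $x$. Writing the $i$th block of $Ax$ as $(Ax)_i = \sum_{j=1}^n A_{ij} x_j$, the triangle inequality and submultiplicativity of the operator norm give $\|(Ax)_i\| \le \sum_{j=1}^n \|A_{ij}\|\,\|x_j\|$.

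From here I would apply the Cauchy--Schwarz inequality in a weighted form: set $B := \max_i \sum_{j=1}^n \|A_{ij}\|$, and estimate
\begin{equation*}
\|(Ax)_i\|^2 \le \Big(\sum_{j=1}^n \|A_{ij}\|\,\|x_j\|\Big)^2 \le \Big(\sum_{j=1}^n \|A_{ij}\|\Big)\Big(\sum_{j=1}^n \|A_{ij}\|\,\|x_j\|^2\Big) \le B \sum_{j=1}^n \|A_{ij}\|\,\|x_j\|^2.
\end{equation*}
Summing over $i$ and interchanging the order of summation yields $\|Ax\|^2 = \sum_{i=1}^n \|(Ax)_i\|^2 \le B \sum_{j=1}^n \|x_j\|^2 \big(\sum_{i=1}^n \|A_{ij}\|\big)$. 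Using symmetry of $A$, which gives $A_{ij} = A_{ji}'$ and hence $\|A_{ij}\| = \|A_{ji}\|$, the inner sum $\sum_{i=1}^n \|A_{ij}\| = \sum_{i=1}^n \|A_{ji}\| \le B$ as well, so $\|Ax\|^2 \le B^2 \sum_{j=1}^n \|x_j\|^2 = B^2 \|x\|^2$. Taking square roots and the supremum over unit vectors $x$ gives $\|A\| \le B$, as claimed.

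There is no serious obstacle here; the only point requiring a little care is the use of symmetry to control $\sum_i \|A_{ij}\|$ (the column sums of block norms) by the same quantity $B$ that bounds the row sums — without symmetry one would only get $\|A\|^2 \le (\max_i \sum_j \|A_{ij}\|)(\max_j \sum_i \|A_{ij}\|)$, a Schur-test type bound. Alternatively, one could avoid the two-sided argument entirely by noting that for symmetric $A$, $\|A\| \le \|A\|_1$ where $\|A\|_1$ here denotes the maximum block-row sum of operator norms, via the standard interpolation $\|A\| \le \sqrt{\|A\|_1 \|A\|_\infty}$ together with $\|A\|_1 = \|A\|_\infty$ for symmetric $A$; but the direct test-vector computation above is cleaner and self-contained.
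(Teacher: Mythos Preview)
Your proof is correct and takes a genuinely different route from the paper's. The paper invokes the block Gershgorin theorem of \cite{salas1999gershgorin}: every eigenvalue $\lambda$ of $A$ lies either in $\sigma(A_{ii})$ for some $i$ (whence $|\lambda|\le\|A_{ii}\|$) or in the set $\{\lambda:\|(A_{ii}-\lambda I)^{-1}\|^{-1}\le\sum_{j\neq i}\|A_{ij}\|\}$, and in the latter case one shows $|\lambda|-\|A_{ii}\|\le\sum_{j\neq i}\|A_{ij}\|$ via the Neumann-series bound $\|(I-M)^{-1}\|^{-1}\ge 1-\|M\|$. Your argument is the block Schur test: bound $\|(Ax)_i\|$ by $\sum_j\|A_{ij}\|\,\|x_j\|$, apply Cauchy--Schwarz with weights $\|A_{ij}\|$, sum, and use symmetry to equate block-row and block-column sums. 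Your route is more elementary and self-contained---it avoids the external Gershgorin reference and the case analysis---while the paper's spectral approach makes explicit that the bound is really a statement about eigenvalue localization. Both yield the same inequality with no loss; your observation that without symmetry one only gets $\|A\|^2\le(\max_i\sum_j\|A_{ij}\|)(\max_j\sum_i\|A_{ij}\|)$ is a nice bonus that the Gershgorin argument does not immediately give.
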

\begin{proof}
	Suppose $\sigma(\cdot)$ is the spectrum of a matrix, which is the set of its eigenvalues. By Gershgorin's Theorem for block matrices (see \cite{salas1999gershgorin}), if we define
	\begin{equation*}
	G_{i} \equiv \sigma(A_{ii})\cup T_{i},
	\end{equation*}
	where $T_{i} = \{\lambda \notin \sigma(A_{ii}): \|(A_{ii}-\lambda I_{d})^{-1}\|^{-1} \leq \sum_{j=1, j \neq i}^{n} \|A_{ij}\|\}$, then
	\begin{equation*}
	\sigma(A) \subset \bigcup_{i=1}^{n}G_{i}. 
	\end{equation*}
	Note that this theorem means the eigenvalue of $A$ either equals $\sigma(A_{ii})$ or in that specific region.
	
	Let $\lambda \in \cup_{i=1}^{n}G_{i}$. If $\lambda \in \sigma(A_{ii})$ for some $i$, then $|\lambda| \leq \|A_{ii}\| \leq \max_{i}\sum_{j=1}^{n}\|A_{ij}\|$. If $\lambda \notin \sigma(A_{ii})$ for all $i$, then we know  $\lambda \in T_{i}$ for some $i$. Now we consider two cases: (i) $\|A_{ii}\| < |\lambda|$, and (ii) $\|A_{ii}\| \geq |\lambda|$, where $i$ such that $\lambda \in T_{i}$.
	For the case of (i), note that if a matrix $M$ is such that  $\|M\| <1$, then
	\begin{equation} \label{inequ}
	\frac{1}{1+\|M\|} \leq \|(I-M)^{-1}\| \leq \frac{1}{1-\|M\|}.
	\end{equation}
	Then we have 
	\begin{align*}
	|\lambda| -\|A_{ii}\| &\leq |\lambda|\left(1-\frac{\|A_{ii}\|}{|\lambda|}\right) \\
	& \leq  |\lambda| \left\|\left(I_{d}-\frac{A_{ii}}{|\lambda|}\right)^{-1} \right\|^{-1} =\|(|\lambda| I_{d}-A_{ii})^{-1}\|^{-1}\\
	&\leq \sum_{j=1, j\neq i}^{n}\|A_{ij}\| . 
	\end{align*}
	Therefore, we have $|\lambda| \leq \sum_{j=1}^{n} \|A_{ij}\| \leq \max_{i}\sum_{j=1}^{n}\|A_{ij}\|$.
	Note that we have the second inequality since $\frac{\|A_{ii}\|}{|\lambda|} <1$ with the inequality (\ref{inequ}). 
	For part (ii), if $\|A_{ii}\| \geq |\lambda|$, then $|\lambda| \leq \sum_{j=1}^{n}\|A_{ij}\| \leq \max_{i}\sum_{j=1}^{n}\|A_{ij}\|$. Therefore, $|\lambda|  \leq \max_{i}\sum_{j=1}^{n}\|A_{ij}\|$ for all $\lambda \in \cup_{i=1}^{n}G_{i}$.
	
	Finally, since $\sigma(A) \subset \bigcup_{i=1}^{n}G_{i}$, we know that for all $\lambda \in \sigma(A)$, $|\lambda| \leq \max_{i}\sum_{j=1}^{n}\|A_{ij}\|$. Therefore, we have $ \|A\| \leq  \max_{i}\sum_{j=1}^{n}\|A_{ij}\|$.
\end{proof}

	\noindent
	\textbf{Proof of Theorem \ref{thm.1}.} For any $NT \times NT$ blocked matrix $M = (m_{t,s})$ where the $m_{t,s}$ is the $(t, s)$th block $N \times N$ matrix. In addition, for any $0 \leq L < T,$ we define $B_{L}(M) = [(m_{t,s})1(|t-s|\leq L)]$, which is an $NT \times NT$ matrix. Then we can write
	\begin{align*}
	\|\widehat{\Omega}-\Omega\| &\leq  \|B_{L}({\Omega})-\Omega\|+\|\widehat{\Omega}-B_{L}({\Omega})\|. 
	\end{align*}
	First, we assume that $\xi_{T}(L)=\max_{t}\sum_{|h| > L} \|Eu_tu_{t-h}'\| =o(1)$ in Assumption \ref{assum2}(ii). This implies that off-diagonal $N\times N$ blocks that are far from the diagonal block are negligible due to weak dependences. As for the first part, by Lemma \ref{lem4},
	\begin{align*}
	\|B_L(\Omega)-\Omega\| & \leq \max\limits_t\sum\limits_{s: |s-t| > L} \|Eu_tu_s'\|= \xi_{T}(L) \rightarrow 0.
	\end{align*}
	Next, note that $ f_{T}(L)=\max_{t}\sum_{|h| \leq L}\|Eu_tu_{t-h}'(1-\omega(|h|,L))\| =o(1)$ (see Assumption \ref{assum2}(iii)). Then by Lemmas \ref{lem2} and \ref{lem4}, for $C < \infty$,
	\begin{align*}
	\|\widehat{\Omega}-B_L(\Omega)\| & \leq \max_{t}\sum_{s:|t-s|\leq L}\|\widehat{\Omega}_{t,s}-Eu_tu_s'\|\\
	& \leq L\max\limits_{|h|\leq L}\|(\widetilde{\Omega}_{h}-\Omega_{h})\omega(|h|,L)\|  +  \max_{t}\sum_{|h| \leq L}\|Eu_tu_{t-h}'(1-\omega(|h|,L))\| \\
	& \leq CL\max\limits_{|h|\leq L}\|\widetilde{\Omega}_{h}-\Omega_{h}\|  +  f_{T}(L) \\
	& =  O_{P}(Lm_{N}\gamma_{T}^{1-q})+f_{T}(L),
	\end{align*}	
	where $\widetilde{\Omega}_{h}$ is defined in (\ref{e.3.8}).
	Therefore, 
	\begin{align*}
	\|\widehat{\Omega}-\Omega\| = O_{P}(Lm_{N}\gamma_{T}^{1-q}+\xi_{T}(L)+f_{T}(L)).
	\end{align*}
	We now show the second statement of Theorem \ref{thm.1}. 
	By the triangular inequality, we have
	\begin{align*}
	\|\widehat{\Omega}^{-1}-\Omega^{-1}\| &\leq \|(\widehat{\Omega}^{-1}-\Omega^{-1})(\widehat{\Omega}-\Omega)\Omega^{-1}\| + \|\Omega^{-1}(\widehat{\Omega}-\Omega)\Omega^{-1}\| \\
	&\leq \|\widehat{\Omega}^{-1}-\Omega^{-1}\| \|\Omega^{-1}\| \|\widehat{\Omega}-\Omega\| + \|\Omega^{-1}\|^{2}\|\widehat{\Omega}-\Omega\|\\
	& =  O_{P}(Lm_{N}\gamma_{T}^{1-q} +\xi_{T}(L)+f_{T}(L)) \|\widehat{\Omega}^{-1}-\Omega^{-1}\| + O_{P}(Lm_{N}\gamma_{T}^{1-q}+ \xi_{T}(L)+f_{T}(L)).
	\end{align*}
	Hence we have $(1+o_{P}(1))\|\widehat{\Omega}^{-1}-\Omega^{-1}\| = O_{P}(Lm_{N}\gamma_{T}^{1-q} + \xi_{T}(L)+f_{T}(L))$, that implies the result. $\Box$
	\\
	\\
	\textbf{Proof of Proposition \ref{proposition1}.}  First the left hand side of equation (\ref{e.3.11}) can be extended as
	\begin{align*}
	\frac{1}{\sqrt{NT}}X'(\widehat{\Omega}^{-1}-\Omega^{-1})U & = \cfrac{1}{\sqrt{NT}}X'\Omega^{-1}(\widehat{\Omega}-\Omega)\Omega^{-1}U \\
	&\;\; +\cfrac{1}{\sqrt{NT}}X'\Omega^{-1}(\widehat{\Omega}-\Omega)\Omega^{-1}(\widehat{\Omega}-\Omega)\Omega^{-1}U\\
	&\;\; +\cfrac{1}{\sqrt{NT}}X'(\widehat{\Omega}^{-1}-\Omega^{-1})(\widehat{\Omega}-\Omega)\Omega^{-1}(\widehat{\Omega}-\Omega)\Omega^{-1}U\\
	& \equiv a+b+c.
	\end{align*}
	Now we shall show that  $\frac{1}{\sqrt{NT}}X'(\widehat{\Omega}^{-1}-\Omega^{-1})U  = \frac{1}{\sqrt{NT}}X'\Omega^{-1}(\widehat{\Omega}-\Omega)\Omega^{-1}U + o_{P}(1)$. First, we define $W=\Omega^{-1}X$ and $\varepsilon = \Omega^{-1}U$. Then, $W=(w_{1}',\cdots, w_{T}')'$ with $w_{t}$ being an $N\times d$ matrix of $w_{it}$, and $\varepsilon_{it}$ is defined similarly. For any $NT \times NT$ matrix $M$, we denote $(M)_{t,s}$ or $(M)_{h}$ as the $(t,s)$th block matrix for $h=t-s$. Moreover, we denote $(M)_{ts,ij}$ or $(M)_{h,ij}$ as the $(i,j)$th element of the $(t,s)$th block matrix. Under Assumption \ref{assum3}, we  show that $b = o_{P}(1)$ as follows:
	
	We write, for $h=t-s$ and $v=k-m$,
	\begin{align*}
	b & =  \cfrac{1}{\sqrt{NT}}\sum_{t=1}^{T}\sum_{s=1}^{T}\sum_{k=1}^{T}\sum_{m=1}^{T}w_{t}'(\widehat{\Omega}-\Omega)_{t,s}(\Omega^{-1})_{s,k}(\widehat{\Omega}-\Omega)_{k,m}\varepsilon_{m}\\
	& =  \cfrac{1}{\sqrt{NT}}\sum_{t=1}^{T}\sum_{|h| \leq L}\sum_{k=1}^{T}\sum_{m=1}^{T}w_{t}'(\widehat{\Omega}-\Omega)_{h}(\Omega^{-1})_{t-h,k}(\widehat{\Omega}-\Omega)_{k,m}\varepsilon_{m}\\
	& \;\; - \cfrac{1}{\sqrt{NT}}\sum_{t=1}^{T}\sum_{|h| > L}\sum_{k=1}^{T}\sum_{m=1}^{T}w_{t}'\Omega_{h}(\Omega^{-1})_{t-h,k}(\widehat{\Omega}-\Omega)_{k,m}\varepsilon_{m}\\
	&=\cfrac{1}{\sqrt{NT}}\sum_{t=1}^{T}\sum_{|h|\leq L}\sum_{|v|\leq L}\sum_{m=1}^{T}w_{t}'(\widehat{\Omega}-\Omega)_{h}(\Omega^{-1})_{t-h,m+v}(\widehat{\Omega}-\Omega)_{v}\varepsilon_{m}\\
	& \;\; -\cfrac{1}{\sqrt{NT}}\sum_{t=1}^{T}\sum_{|h|\leq L}\sum_{|v|>L}\sum_{m=1}^{T}w_{t}'(\widehat{\Omega}-\Omega)_{h}(\Omega^{-1})_{t-h,m+v}\Omega_{v}\varepsilon_{m}\\
	& \;\; -\cfrac{1}{\sqrt{NT}}\sum_{t=1}^{T}\sum_{|h|>L}\sum_{k=1}^{T}\sum_{m=1}^{T}w_{t}'\Omega_{h}(\Omega^{-1})_{t-h,k}(\widehat{\Omega}-\Omega)_{k,m}\varepsilon_{m}\\
	&\equiv b_1+b_2+b_3.
	\end{align*}
	First, define $Q_{imp}^{hv} = \sum_{t=1}^{T}w_{it}\sum_{j=1}^{N}(\Omega^{-1})_{t-h,m+v,jp}$ as in Lemma \ref{lem3}. We have, by Lemmas \ref{lem2}-\ref{lem3} and Assumption \ref{assum3},
	\begin{align*}
	\| b_1 \| &= \left\|\frac{1}{\sqrt{NT}}\sum_{|h|\leq L}\sum_{|v|\leq L}\sum_{i=1}^{N}\sum_{j=1}^{N}\sum_{p=1}^{N}\sum_{q=1}^{N}(\widehat{\Omega}-\Omega)_{h,ij}(\widehat{\Omega}-\Omega)_{v,pq}\sum_{t=1}^{T}\sum_{m=1}^{T}w_{it}\varepsilon_{qm}(\Omega^{-1})_{t-h,m+v,jp} \right \| \\
	&\leq \frac{1}{\sqrt{NT}}\left[\max_{|h|\leq L}\max_{j}\sum_{i=1}^{N}\left|(\widehat{\Omega}-\Omega)_{h,ij}\right|\right]\left[\max_{|v|\leq L}\max_{q}\sum_{p=1}^{N}\left|(\widehat{\Omega}-\Omega)_{v,pq}\right|\right]\\
	& \;\;\;\;\; \times \max_{i}\max_{p}\left\|\sum_{|h|\leq L}\sum_{|v| \leq L}\sum_{j=1}^{N}\sum_{q=1}^{N}\sum_{t=1}^{T}\sum_{m=1}^{T}w_{it}\varepsilon_{qm}(\Omega^{-1})_{t-h,m+v,jp}\right\|\\
	&\leq \frac{1}{\sqrt{NT}}\max_{|h|\leq L}\|\widehat{\Omega}_{h}-\Omega_{h}\|_{1}^2 \max_{i,p}\left\|\sum_{|h|\leq L}\sum_{|v|\leq L}\sum_{q=1}^{N}\sum_{m=1}^{T}\varepsilon_{qm}Q_{imp}^{hv}\right\|\\
	&\leq O(L^2\sqrt{NT})\max_{|h| \leq L}\|\widehat{\Omega}_{h}-\Omega_{h}\|_{1}^2 \max_{i,p,h,v}\left\|\frac{1}{NT}\sum_{q=1}^{N}\sum_{m=1}^{T}\varepsilon_{qm}Q_{imp}^{hv}\right\|\\
	& = O_{P}(\sqrt{T}L^{2}m_{N}^{2}\gamma_{T}^{3-2q}) = o_{P}(1).
	\end{align*}
	In addition, by Lemma \ref{lem2} and Bernstein inequality,
	\begin{align*}
	\|b_2 \| & = \left \| \cfrac{1}{\sqrt{NT}}\sum_{t=1}^{T}\sum_{|h|\leq L}\sum_{|v|>L}\sum_{m=1}^{T}\sum_{i=1}^{N}\sum_{j=1}^{N}\sum_{p=1}^{N}\sum_{q=1}^{N}w_{it}(\widehat{\Omega}-\Omega)_{h,ij}(\Omega^{-1})_{t-h,m+v,jp}\Omega_{v,pq}\varepsilon_{qm} \right\| \\
	& \leq \cfrac{L}{\sqrt{NT}}\left[\max_{|h|\leq L}\max_{j}\sum_{i=1}^{N}\left|(\widehat{\Omega}-\Omega)_{h,ij}\right|\right]\left[\max_{q}\sum_{p=1}^{N}\sum_{|v|>L}\left|\Omega_{v,pq}\right|\right] \\
	& \;\;\;\;\; \times \max_{i}\sum_{t=1}^{T}\|w_{it}\| \max_{m}\sum_{q=1}^{N} |\varepsilon_{qm}| \max_{t,h,v,p}\sum_{m=1}^{T}\sum_{j=1}^{N}|(\Omega^{-1})_{t-h,m+v,jp}|\\
	& \leq O(L\sqrt{NT})\max_{|h|\leq L}\|\widehat{\Omega}_{h}-\Omega_{h}\|_{1} \sum_{|v|>L}\|\Omega_{v}\|_{1}  \\
	&= O_{P}(L^{1-\alpha}\sqrt{NT}m_{N}\gamma_{T}^{1-q}) = o_{P}(1)
	\end{align*}
	and
	\begin{align*}
	\| b_3 \| & = \left\| \cfrac{1}{\sqrt{NT}}\sum_{t=1}^{T}\sum_{m=1}^{T}\sum_{|h|>L}\sum_{k=1}^{T}\sum_{i=1}^{N}\sum_{j=1}^{N}\sum_{p=1}^{N}\sum_{q=1}^{N}w_{it}\Omega_{h,ij}(\Omega^{-1})_{t-h,k,jp}(\widehat{\Omega}-\Omega)_{km,pq}\varepsilon_{qm} \right \| \\
	& \leq \cfrac{T}{\sqrt{NT}}\left[\max_{|v|\leq L}\max_{q}\sum_{p=1}^{N}\left|(\widehat{\Omega}-\Omega)_{v,pq}\right|\right]\left[\max_{j}\sum_{i=1}^{N}\sum_{|h|>L}\left|\Omega_{h,ij}\right|\right] \\
	& \;\;\;\;\; \times \max_{i}\sum_{t=1}^{T}\|w_{it}\| \max_{m}\sum_{q=1}^{N} |\varepsilon_{qm}| \max_{t,h,v,p}\sum_{m=1}^{T}\sum_{j=1}^{N}|(\Omega^{-1})_{t-h,m+v,jp}|\\
	& \leq O(T\sqrt{NT})\max_{v}\|\widehat{\Omega}_{v}-\Omega_{v}\|_{1} \sum_{|h|>L}\|\Omega_{h}\|_{1}\\
	& =O_{P}(L^{-\alpha}T\sqrt{NT}m_{N}\gamma_{T}^{1-q}) = o_{P}(1).
	\end{align*}
	Therefore, we have $\| b \| = o_{P}(1)$. 
	Next, we define $\gamma^{*} = Lm_{N}\gamma_{T}^{1-q}+\xi_{T}(L)+f_{T}(L)$. By Theorem \ref{thm.1}, $\|\widehat{\Omega}-\Omega\| = O_{P}(\gamma^{*})= \|\widehat{\Omega}^{-1}-\Omega^{-1}\|$. Then, under the Assumption \ref{assum3}(v), when $\|\Omega^{-1}\|_{1} = O(1)$, we have
	\begin{align*} 
	\|c\| & =  \left\|\frac{1}{\sqrt{NT}}X'(\widehat{\Omega}^{-1}-\Omega^{-1})(\widehat{\Omega}-\Omega)\Omega^{-1}(\widehat{\Omega}-\Omega)\Omega^{-1}U\right\| \\~
	& \leq \|\widehat{\Omega}^{-1}-\Omega^{-1}\|\|\widehat{\Omega}-\Omega\|^{2} \sqrt{NT} = O_{P}(\sqrt{NT}\gamma^{*3})=o_{P}(1).
	\end{align*}
	Therefore, we have $\frac{1}{\sqrt{NT}}X'(\widehat{\Omega}^{-1}-\Omega^{-1})U  = \frac{1}{\sqrt{NT}}X'\Omega^{-1}(\widehat{\Omega}-\Omega)\Omega^{-1}U + o_{P}(1)$.\\
	
	From Theorem \ref{thm.1}, it is easy to that $\frac{1}{NT}X'\widehat{\Omega}^{-1}X = \frac{1}{NT}X'\Omega^{-1}X+o_{P}(1)$. Also, by the weak law of large numbers, $(\frac{1}{NT}X'\Omega^{-1}X)^{-1} = \Gamma^{-1}+o_{P}(1)$, where $\Gamma = E(\frac{1}{NT}X'\Omega^{-1}X)$. Then
	\begin{align*}
	\sqrt{NT}(\widehat{\beta}_{FGLS}-\beta) & = \left(\frac{1}{NT}X'\Omega^{-1}X\right)^{-1}\left(\frac{1}{\sqrt{NT}}X'\widehat{\Omega}^{-1}U\right)+o_{P}(1)\\
	& = \left(\frac{1}{NT}X'\Omega^{-1}X\right)^{-1}\left(\frac{1}{\sqrt{NT}}X'\Omega^{-1}U+\frac{1}{\sqrt{NT}}X'(\widehat{\Omega}^{-1}-\Omega^{-1})U\right)+o_{P}(1)\\
	& = \Gamma^{-1}\left(\frac{1}{\sqrt{NT}}X'\Omega^{-1}U+\frac{1}{\sqrt{NT}}X'(\widehat{\Omega}^{-1}-\Omega^{-1})U\right)+o_{P}(1)\\
	& = \Gamma^{-1}\left(\frac{1}{\sqrt{NT}}X'\Omega^{-1}U\right)+ \Gamma^{-1}\left(\frac{1}{\sqrt{NT}}X'\Omega^{-1}(\widehat{\Omega}-\Omega)\Omega^{-1}U\right)+o_{P}(1). \;\; \Box
	\end{align*}
	\\
	\\
	\textbf{Proof of Theorem \ref{thm.2}.}  It suffices to prove $\left\| \frac{1}{\sqrt{NT}}X'\Omega^{-1}(\widehat{\Omega}-\Omega)\Omega^{-1}U \right\|  = o_{P}(1)$.  \\~

Let $A_{b_h} = \{(i,j) : |Eu_{it}u_{j,t-h}| \neq 0\}$. Also, let $W = \Omega^{-1}X$, and $\varepsilon = \Omega^{-1}U$. In addition, $w_{it}$ and $\varepsilon_{it}$ are defined as in the proof of Proposition \ref{proposition1}. We define $\mathbb{G}_{T,ij}^{1}(h) = \frac{1}{\sqrt{T}}\sum_{t=h+1}^{T}(u_{it}u_{j,t-h}-Eu_{it}u_{j,t-h})$ and $\mathbb{G}_{T,ij}^{2}(h) = \frac{1}{\sqrt{T}}\sum_{t=h+1}^{T}w_{it}\varepsilon_{j,t-h}$. Then under the Assumption \ref{assum4}, there is $C>0$ so that 
\begin{flalign*}
\left\| \cfrac{1}{\sqrt{NT}}X'\Omega^{-1}(\widehat{\Omega}-\Omega)\Omega^{-1}U  \right\| & = \left\| \frac{1}{\sqrt{NT}}W'(\widehat{\Omega}-\Omega)\varepsilon \right\| \\
& \leq \left\| \frac{C}{\sqrt{NT}}\sum_{h=0}^{L}\sum_{i,j \in A_{b_h}}\sum_{t=h+1}^{T}w_{it}\varepsilon_{j,t-h}\frac{1}{T}\sum_{s=h+1}^{T}(u_{is}u_{j,s-h}-Eu_{it}u_{j,t-h}) \right\| + o_{P}(1)\\
&= \left\| \frac{C}{\sqrt{NT}}\sum_{h=0}^{L}\sum_{i,j \in A_{b_h}}\mathbb{G}_{T,ij}^{1}(h)\mathbb{G}_{T,ij}^{2}(h) \right\|+ o_{P}(1) = o_{P}(1).
\end{flalign*}

	\newpage
	\bibliographystyle{economet}
	\bibliography{fgls_bib}

\end{document}